\newcommand{\sgn}{\mbox{\textnormal{sgn}}}
\newcommand{\abs}[1]{\lvert#1\rvert}
\newcommand{\babs}[1]{\big\lvert#1\big\rvert}
\newcommand{\eps}{\varepsilon}
\newcommand{\R}{\mathbb R}
\newcommand{\dt}{\partial_t}
\newcommand{\dx}{\partial_x}
\newcommand{\dz}{\partial_z}
\newcommand{\bU}{{\bf U}}
\newcommand{\nam}{\nabla^\mu}
\newtheorem{remark}{Remark}
\newtheorem{proposition}{Proposition}
\title{Recovering water wave elevation from pressure measurements }
\author{P. Bonneton
 \and D. Lannes}
\begin{document}

\maketitle

\begin{abstract}
The reconstruction of water wave elevation from bottom pressure measurements is an important issue for coastal applications, but corresponds to a difficult mathematical problem. In this paper we present the derivation of a method which allows the elevation reconstruction of water waves in intermediate and shallow waters.  From comparisons with numerical Euler solutions and wave-tank experiments we show that our nonlinear method provides much better results of the surface elevation reconstruction compared to the linear transfer function approach commonly used in coastal applications. More specifically, our method
accurately reproduces the peaked and skewed shape of nonlinear wave fields. Therefore, it is particularly relevant for applications on extreme waves and wave-induced sediment transport. 
\end{abstract}

%
%
\section{Introduction}
 
Accurate measurements of surface waves in the coastal zone are crucial for many applications, such as coastal flooding, navigation and offshore platform safety or wave-induced circulation and sediment transport. Underwater pressure transducers have long been used for measuring surface waves. The reason is that these wave gauges are cheap, easy to deploy at the sea bottom and are much less affected by storms, ships and vandalism than surface wave buoys (\cite{kennedy2010}). However, the reconstruction of the wave field from bottom pressure measurements is a difficult mathematical problem. 

The hydrostatic assumption is, most of the time, relevant for describing long waves, such as tsunamis and tides. However, as long waves propagate shoreward, nonlinear interactions are enhanced by the water depth decrease and can lead to the formation of dispersive shocks (e.g. \cite{madsen2008}, \cite{tissier2011} and \cite{Bonneton2015}). In that case the hydrostatic assumption is no longer valid (see \cite{martins2017}). 

For wind-generated waves the commonly used practice is to recover the wave field by means of a transfer function based on linear wave theory (e.g. \cite{guza1980}, \cite{bishop1987} and \cite{tsai2005}). This method allows, in intermediate water depth, a satisfactory estimate of bulk wave parameters, such as the significant wave height (\cite{tsai2005}).  In shallow water the nonlinearity effects increase and contribute to the peaked and skewed shape of waves. A correct description of these wave properties is of paramount importance for many coastal applications. For instance, studies on wave submersion require an accurate characterization of the highest incoming wave crests. Furthermore, the wave asymmetry and skewness play an important role in wave-induced sediment transport (e.g. \cite{dubarbier2015}).   For all these applications, the transfer function is no longer suitable and a nonlinear reconstruction method is required.

In recent years, several studies have been devoted to the nonlinear reconstruction of water-wave profile from pressure measurements. For steady one-dimensional water waves traveling at constant celerity, \cite{deconinck2012} and \cite{oliveras2012} derived, from the Euler equations, a nonlinear non-local implicit relationship between the pressure and the surface elevation. \cite{constantin2012} obtained, for solitary waves, an explicit formula relating the pressure and the wave elevation. For periodic waves, reconstruction methods were derived which required either solving an ordinary differential equation (\cite{clamond2013a}) or solving an implicit functional equation (\cite{clamond2013b}). All these recent nonlinear recovery methods hold only for steady waves propagating at a constant celerity. Therefore, they cannot be directly applied to real ocean surface waves which are inherently non-stationary and random. However, from their nonlinear constant-celerity approach, \cite{oliveras2012} obtained a heuristic approximation which can be applied to waves that are not necessarily traveling with constant celerity.

In this paper we present the derivation of nonlinear formulas which allow the elevation reconstruction of real surface waves in intermediate and shallow waters.  After presenting the modelling framework in section 2, we derive a weakly-nonlinear fully-dispersive reconstruction formula in section 3, which writes:
$$
\zeta_{\rm NL}=\zeta_{\rm L}-\frac{1}{g} \dt \big(\zeta_{\rm L} \dt \zeta_{\rm L}\big),
$$
where $\zeta_{\rm L}$ and $\zeta_{\rm NL}$ are the linear and nonlinear elevation approximations respectively and $g$ the acceleration of gravity. We discuss in section 4 how to apply this nonlinear method for practical applications where the only input data are bottom pressure time series recorded at a given measurement point. In section 5 we show
that this simple and easy-to-use nonlinear formula  provides much better reconstructions of the surface elevation compared to the classical transfer function approach, in particular in terms of maximum wave elevation and wave skewness.

%
%
\section{Modelling framework}

%
%
\subsection{Notations}\label{sectnot}
 We denote by $z$ the vertical variable and by $X\in \R^d$ the horizontal variables, with  $d$ the surface dimension ($d=1$ or 2) . $\nabla$ and $\Delta$ are the gradient and Laplace operators with respect to the horizontal variables, and $\nabla_{X,z}$ and $\Delta_{X,z}$ are their three-dimensional counterparts.

We denote by  $\widehat{\cdot}$ the Fourier transform in space, and by $\widetilde{\cdot}$ or ${\mathcal F}_t$ the Fourier transform in time, so that for a function of space and time, one has
$$
\widehat{u}(t,\xi)=\frac{1}{(2\pi)^{d/2}}\int_{\R^d} e^{-ix\cdot \xi}u(t,X)dX
\quad\mbox{ and }\quad
\widetilde{u}(\omega,X)=\frac{1}{(2\pi)^{1/2}}\int_{\R} e^{-i\omega t}u(t,X)dt.
$$

We also denote by $f(D)$ Fourier multipliers in space, and $g(D_t)$ Fourier multipliers in time, defined as
$$
\widehat{f(D)u(t,\cdot)}(\xi)=f(\xi)\widehat{u}(t,\xi)
\quad\mbox{ and }\quad
\widetilde{g(D_t)u(\cdot,X)}(\omega)=g(\omega)\widetilde{u}(\omega,X).
$$

%
%
\subsection{Physical background}
\label{sectphy}

We consider three-dimensional waves propagating in intermediate and shallow water depths. We denote $z=\zeta(t,X)$ the elevation of the free surface above the still water level $z=0$, and by $z=-h_b(X)$ the bottom elevation. We are looking for a relationship between pressure time series measured at the bottom, $P_{\rm b}(t,X_0)$, and the elevation $\zeta(t,X_0)$ at the same horizontal location $X_0$.

 In most coastal environments, the bottom elevation is a slowly varying function of $X$. However, for wave modelling over large coastal areas the bottom variation can not be neglected. By contrast, for our local reconstruction approach it is justified to neglect the bottom variation (see \S \ref{sectnonflat}). The classical transfer function method, widely used in coastal engineering, also relies on this assumption. In the following, the bottom elevation is given by  $z=-h_0$, where $h_0$ is constant. Our approach cannot be applied to strongly varying bottoms like those related to coastal structures, except if the pressure sensor is located several wavelength offshore the structure. 

The presence of a background current, defined as the mean current in the frame of the seabed, can affect the propagation of waves in the coastal zone. For instance, the waves can encounter significant currents close to river mouths or tidal inlets. However, in wave-dominated environments the background current is usually much smaller than the phase velocity, and its effects can be neglected (see appendix \ref{app_current1}). This is the assumption we make in the core of the paper, but we also present in appendix \ref{app_current} an attempt to generalize our reconstruction method in the presence of a vertically-uniform horizontal background current.

\subsection{The equations of motion}
We consider here an incompressible homogeneous inviscid fluid delimited above by a free surface and below by a flat bottom. 
Assuming that the flow is irrotational, the velocity field $\bU$ of the fluid is given by $\bU=\nabla_{X,z}\Phi$, where the velocity potential $\Phi$ satisfies the mass conservation equation
\begin{equation}
\label{Laplace}\Delta \Phi +\dz^2 \Phi=0\quad\mbox{ in } \Omega(t),
\end{equation}
where $\Omega(t)$ is the fluid domain at time $t$ and is given by
$$
\Omega(t)=\{(X,z)\in \R^{d+1}, -h_0<z<\zeta(t,X)\}.
$$
The fluid motion is governed by Euler's equation, or equivalently Bernoulli's equation when written in terms of $\Phi$,
\begin{equation}\label{Bernoulli}
\dt \Phi +g z+\frac{1}{2}\abs{\nabla\Phi}^2+\frac{1}{2}\abs{\dz\Phi}^2=-\frac{1}{\rho}(P-P_{\rm atm}),
\end{equation}
where $\rho$ is the density of the fluid and $P_{\rm atm}$ the (constant) atmospheric pressure. These equations are complemented by boundary conditions. At the bottom we have
\begin{equation}
\label{BC_fond}\dz \Phi=0 \quad\mbox{ on }z=-h_0;
\end{equation}
at the surface, we have the classical kinematic equation on $\zeta$,
\begin{equation}\label{kinematic}
\dt \zeta =\dz \Phi -\nabla\zeta \cdot \nabla \Phi \quad \mbox{ on }\quad z=\zeta ,
\end{equation}
and the pressure continuity,
\begin{equation}\label{BC_P}
P=P_{\rm atm} \quad \mbox{ on }\quad z=\zeta .
\end{equation}

%
%
\subsection{Dimensionless equations}

Three main length scales are involved in this problem: the typical horizontal scale $L$, the amplitude $a$ of the wave, and the water depth $h_0$. We shall use several dimensionless numbers formed with these quantities, namely,
$$
\eps=\frac{a}{h_0}, \qquad \mu=\frac{h_0^2}{L^2}, \qquad \sigma=\frac{a}{L}.
$$
These parameters are respectively called {\it nonlinearity}, {\it shallowness} and {\it steepness} parameters and are related through the identity
$$
\sigma=\eps\sqrt{\mu} .
$$
The different variables and functions involved in this problem can be put in dimensionless form using the relations
$$
X'=\frac{X}{L},\quad z'=\frac{z}{h_0}, \quad t'=\frac{\sqrt{gh_0}}{L} t,\quad \zeta'=\frac{\zeta}{a},\quad \Phi'=\frac{h_0}{aL\sqrt{gh_0}}\Phi,\quad P'=\frac{P}{\rho g h_0},
$$
where the primes are used to denote dimensionless quantities. \\
Omitting the primes for the sake of clarity, the fluid domain becomes, in dimensionless form,
$$
\Omega_\eps=\{(X,z)\in \R^{d+1}, -1<z<\eps \zeta(t,X)\},
$$
and the equations \eqref{Laplace}-\eqref{BC_fond} become
\begin{align}
\label{Laplace_ND}\mu\Delta \Phi +\dz^2 \Phi=0&\quad\mbox{ in } \Omega_\eps(t),\\
\label{BC_fond_ND}\dz \Phi=0& \quad\mbox{ on }z=-1.
\end{align}
Similarly, the dimensionless Bernoulli equation is 
\begin{equation}\label{Bernoulli_ND}
\dt \Phi +\frac{1}{\eps}z+\frac{\eps}{2}\abs{\nabla\Phi}^2+\frac{\eps}{2\mu}\abs{\dz\Phi}^2=-\frac{1}{\eps}(P-P_{\rm atm}),
\end{equation}
while for the kinematic equation we have
\begin{equation}\label{kinematic_ND}
\mu \dt \zeta =\dz \Phi -\eps\mu \nabla\zeta \cdot \nabla \Phi \quad \mbox{ on }\quad z=\eps\zeta.
\end{equation}

%
%
\subsection{A general formula for $\zeta$}

Evaluating \eqref{Bernoulli_ND} at the surface $z=\eps\zeta$  and at the bottom $z=-1$ respectively, and using the notations
$$
\psi=\Phi_{\vert_{z=\eps\zeta}}\qquad \quad \Phi_{\rm b}=\Phi_{\vert_{z=-1}},\quad\mbox{ and}\quad P_{\rm b}=P_{\vert_{z=-1}},
$$
we obtain respectively
\begin{align}
\label{eqsurf_ND}
\dt \psi &+\zeta+\frac{\eps}{2}\vert \nabla\psi\vert^2
-\frac{\eps}{2\mu}(1+ \eps^2\mu\vert \nabla\zeta\vert^2)(\dz \Phi_{\vert_{z=\eps\zeta}})^2 =0, \\
\label{eqbott_ND}
\dt \Phi_{\rm b}&-\frac{1}{\eps}+\frac{\eps}{2}\vert \nabla\Phi_{\rm b}\vert^2=-\frac{1}{\eps}(P_{\rm b}-P_{\rm atm}).
\end{align}
From these equations, we obtain the following exact expression for the surface elevation
\begin{equation}\label{formuleexacte_ND}
\zeta=\zeta_{\rm H}+\dt \Phi_{\rm b}-\dt \psi+\frac{\eps}{2}\big(\vert \nabla\Phi_{\rm b}\vert^2-\vert \nabla\psi\vert^2\big)+\frac{\eps}{2\mu} (1+ \eps^2\mu\vert \nabla\zeta\vert^2)(\dz \Phi_{\vert_{z=\eps\zeta}})^2 ,
\end{equation}
where $\zeta_{\rm H}$ is the dimensionless hydrostatic reconstruction
\begin{equation}\label{rec_hydro_ND}
\zeta_{\rm H}=\frac{1}{\eps}(P_{\rm b}-P_{\rm atm}-1).
\end{equation}
 
The formula \eqref{formuleexacte_ND} is exact but involves quantities that cannot be expressed in terms of the pressure measured at the bottom $P_{\rm b}$. Our goal is to derive approximate formulas that can be expressed as a function of the measured quantity $P_{\rm b}$, or equivalently $\zeta_{\rm H}$. In order to do so, we shall perform an asymptotic expansion of \eqref{formuleexacte_ND} in terms of the {\it steepness parameter} $\sigma$, which is a small parameter for most oceanic waves.

In this paper, we consider {\it small steepness} configurations in {\it shallow or intermediate depth}\footnote{It would be possible to generalize the nondimensionalization as presented in \cite{lann_POF2009} in order to also cover the deep water case $\mu \gg 1$, but this is not relevant for the applications we are interested in here.}, that is
\begin{equation}\label{regime}
\sigma=\eps\sqrt{\mu}\ll 1\quad \mbox{ and }\quad \eps,\mu \lesssim 1.
\end{equation}

In particular, this covers the following cases:
\begin{itemize}
\item \ Large amplitude ($\eps\sim 1$) waves in shallow water ($\mu\ll 1$)
\item \ Small amplitude ($\eps\ll 1$) waves in intermediate depth ($\mu\sim1$).
\end{itemize}

%
%
\section{Asymptotic reconstruction formulas}
\label{section_asymptotic}
Our goal in this section is to derive approximate expressions of the exact  formula \eqref{formuleexacte_ND}, in the small steepness regime \eqref{regime}, as a function of the measured quantity $\zeta_{\rm H}$. We also derive simplified expressions in the shallow water case ($\mu\ll 1$).

%
%
\subsection{Linear reconstruction for the surface elevation}\label{sectlin}

As shown in Appendix \ref{appasPhi}, the velocity potential $\Phi$ is given at first order by the linear formula
\begin{equation}\label{app0}
\Phi=\frac{\cosh(\sqrt{\mu}(z+1)\vert D\vert)}{\cosh(\sqrt{\mu}\vert D\vert)}\psi+O(\sigma),
\end{equation}
where, we recall that $\psi=\Phi_{\vert_{z=\eps\zeta}}$, and where we used the notations for Fourier multipliers in space introduced in \S \ref{sectnot}.\\
Using the formula \eqref{app0}, it is possible to approximate the various terms on the right-hand-side of \eqref{formuleexacte_ND}. In particular we get that
$$
\frac{\eps}{2}\big(\vert \nabla\Phi_{\rm b}\vert^2-\vert \nabla\psi\vert^2\big)
+ \frac{\eps}{2\mu} (1+\sigma^2 \vert \nabla\zeta\vert^2)(\dz \Phi_{\vert_{z=\eps\zeta}})^2 =O(\sigma)
$$
so that \eqref{formuleexacte_ND} gives
\begin{align*}
\zeta&=\big(\frac{1}{\cosh(\sqrt{\mu}\vert D\vert)}-1\big)\dt\psi+\zeta_{\rm H}+O(\sigma)\\
&=\big(1-\frac{1}{\cosh(\sqrt{\mu}\vert D\vert)}\big)\zeta+\zeta_{\rm H}+O(\sigma),
\end{align*}
where we used \eqref{eqsurf_ND} to derive the second identity.
We then have
$$
\frac{1}{\cosh(\sqrt{\mu}\vert D\vert)} \zeta=\zeta_{\rm H}+O(\sigma).
$$
Neglecting the $O(\sigma)$ terms, we obtain the following linear reconstruction formula
\begin{equation}\label{rec_lin0}
\zeta_{\rm L}(t,X)=\big[\cosh(\sqrt{\mu}\vert D\vert)\zeta_{\rm H}(t,\cdot)\big](X).
\end{equation}
A generalization of this equation, when the pressure is measured at some point located above the bottom, is also given in \eqref{rec_lin0_meas} in Appendix \ref{appnotbot}.
We show in the next section how to make this formula more precise by including quadratic nonlinear terms.

%
%
\subsection{Quadratic reconstruction for the surface elevation}\label{sectquadr}

In order to include nonlinear corrections to the linear reconstruction formula \eqref{rec_lin0}, we need a quadratic approximation of the velocity potential $\Phi$. As shown in Appendix \ref{appasPhi}, a second order approximation of the velocity potential $\Phi$ is given by the following formula, which is quadratic in $(\zeta,\psi)$,
\begin{equation}\label{app1}
\Phi=\frac{\cosh(\sqrt{\mu}(z+1)\abs{D})}{\cosh(\sqrt{\mu}\abs{D})}\big(\psi-\eps \zeta G_0\psi\big) +O(\sigma^2),
\end{equation}
with 
$$
G_0=\sqrt{\mu}\abs{D}\tanh(\sqrt{\mu}\abs{D}).
$$
We use this approximation to derive a second order approximation of the formula \eqref{formuleexacte_ND} for $\zeta$. We show below how to approximate the different components of \eqref{formuleexacte_ND}:
$$
\zeta=\zeta_{\rm H}+A+B+C,
$$
where, $A:=\dt \Phi_{\rm b}-\dt \psi$, $B:=\frac{\eps}{2}\big(\vert \nabla\Phi_{\rm b}\vert^2-\vert \nabla\psi\vert^2\big)$ and $ C:=\frac{\eps}{2\mu} (1+\eps^2\mu \vert \nabla\zeta\vert^2)(\dz \Phi_{\vert_{z=\eps\zeta}})^2$.
\begin{itemize}
\item \ Approximation of $A$. From the second order approximation of $\Phi$ given in \eqref{app1}, one has
$$
A=\big(\frac{1}{\cosh(\sqrt{\mu}\abs{D})}-1\big)\dt \psi-\eps \frac{1}{\cosh(\sqrt{\mu}\abs{D})} \dt(\zeta G_0 \psi)+O(\sigma^2).
$$
Plugging \eqref{app1} into the kinematic equation \eqref{kinematic_ND}, one also gets
$$
G_0\psi=\mu \dt \zeta  +O(\sqrt{\mu}\sigma)
$$
so that
$$
A=\big(\frac{1}{\cosh(\sqrt{\mu}\abs{D})}-1\big)\dt \psi-\eps \mu \frac{1}{\cosh(\sqrt{\mu}\abs{D})} \dt(\zeta \dt \zeta)+O(\sigma^2).
$$
\item \ Approximation of $B$. Using the first order approximation \eqref{app0} of $\Phi$, one gets
\begin{align*}
B&=\frac{\eps}{2}\babs{\frac{1}{\cosh(\sqrt{\mu}\abs{D})}\nabla\psi}^2-\frac{\eps}{2}\abs{\nabla\psi}^2
+O(\sigma^2)
\end{align*}
\item \ Approximation of $ C$. Let us first write
\begin{align*}
 C&:= \big(1-\frac{1}{\cosh(\sqrt{\mu}\abs{D})})C+\frac{1}{\cosh(\sqrt{\mu}\abs{D})} C\\
 &=\big(1-\frac{1}{\cosh(\sqrt{\mu}\abs{D})})C+\frac{\eps}{2\mu} \frac{1}{\cosh(\sqrt{\mu}\abs{D})}(G_0\psi)^2+O(\sigma^2),
 \end{align*}
 where we used the first order approximation \eqref{app0} of $\Phi$ to derive the second identity. Approximating as above $G_0\psi$ by $\mu\dt \zeta$, this yields
 $$
 C=\big(1-\frac{1}{\cosh(\sqrt{\mu}\abs{D})})C+\frac{1}{2} \eps\mu \frac{1}{\cosh(\sqrt{\mu}\abs{D})}(\dt\zeta)^2+O(\sigma^2).
 $$
\end{itemize}
We deduce from the lines above that
\begin{align*}
\zeta=&\zeta_{\rm H}+\big(\frac{1}{\cosh(\sqrt{\mu}\abs{D})}-1\big)\big[\dt \psi-C\big]\\
&+\eps \mu \frac{1}{\cosh(\sqrt{\mu}\abs{D})} \big(\frac{1}{2}(\dt \zeta)^2-\dt(\zeta \dt \zeta)\big) + \frac{\eps}{2}\babs{\frac{1}{\cosh(\sqrt{\mu}\abs{D})}\nabla\psi}^2-\frac{\eps}{2}\abs{\nabla\psi}^2+O(\sigma^2).
\end{align*}
We now remark further that \eqref{eqsurf_ND} can be written
$$
\dt \psi-C=-\zeta-\frac{\eps}{2}\abs{\nabla\psi}^2,
$$
so that we finally get
\begin{align*}
\frac{1}{\cosh(\sqrt{\mu}\abs{D})}\zeta=\zeta_{\rm H}-\eps \frac{1}{\cosh(\sqrt{\mu}\abs{D})}&\big[
\frac{1}{2}\abs{\nabla\psi}^2+\mu\dt (\zeta\dt \zeta)-\frac{1}{2}\mu (\dt\zeta)^2\\
&-\frac{1}{2}\cosh(\sqrt{\mu}\abs{D})\babs{\frac{1}{\cosh(\sqrt{\mu}\abs{D})}\nabla\psi}^2\big]+O(\sigma^2).
\end{align*}
Neglecting the $O(\sigma^2)$ terms and multiplying by $\cosh(\sqrt{\mu}\abs{D})$, we obtain
\begin{align}
\nonumber
\zeta=\cosh(\sqrt{\mu}\abs{D})\zeta_{\rm H}-\eps&\big[
\frac{1}{2}\abs{\nabla\psi}^2+\mu\dt (\zeta\dt \zeta)-\frac{1}{2}\mu (\dt\zeta)^2\\
\label{rec_quad}
&-\frac{1}{2}\cosh(\sqrt{\mu}\abs{D})\babs{\frac{1}{\cosh(\sqrt{\mu}\abs{D})}\nabla\psi}^2\big].
\end{align}
In order to simplify the nonlinear terms, we need now to assume that the horizontal dimension is equal to one ($d=1$); one then has $\cosh(\sqrt{\mu}\abs{D})=\cosh(\sqrt{\mu}D)$.  
Using the trigonometric formula $\cosh(a+b)=\cosh(a)\cosh(b)+\sinh(a)\sinh(b)$, one readily gets the following identity
$$
\cosh(\sqrt{\mu} D)(fg)=(\cosh(\sqrt{\mu} D)f)(\cosh(\sqrt{\mu} D)g)+(\sinh(\sqrt{\mu} D)f)(\sinh(\sqrt{\mu} D)g),
$$
from which one deduces that 
\begin{align*}
\cosh(\sqrt{\mu}\abs{D})\babs{\frac{1}{\cosh(\sqrt{\mu}\abs{D})}\dx\psi}^2&=(\dx\psi)^2+(\tanh(\sqrt{\mu}D)\dx \psi)^2\\
&=(\dx\psi)^2-\frac{1}{\mu}(G_0 \psi)^2.
\end{align*}
Approximating as above $G_0\psi$ by $\mu \dt \zeta$, we deduce from \eqref{rec_quad} that
$$
\zeta=\cosh(\sqrt{\mu}\abs{D})\zeta_{\rm H}-\eps\mu \dt \big(\zeta \dt \zeta\big),
$$
and finally, up to $O(\eps^2\mu^2)$ terms,
\begin{equation}\label{rec_quad0}
\zeta_{\rm NL}=\zeta_{\rm L}-\sqrt{\mu}\sigma \dt \big(\zeta_{\rm L} \dt \zeta_{\rm L}\big)\qquad (\mbox{horizontal dimension }d=1)
\end{equation}
with  $\zeta_{\rm L}$ given by linear formula \eqref{rec_lin0}.
A generalization of this formula when the pressure is measured at some distance above the bottom is provided by \eqref{notbotNL} in Appendix \ref{appnotbot}.
The new nonlinear reconstruction formula \eqref{rec_quad0} represents the main result of this paper. This equation can be rewritten:
$$
\zeta_{\rm NL}=\zeta_{\rm L}-\sqrt{\mu}\sigma \zeta_{\rm L} \dt^2 \zeta_{\rm L}-\sqrt{\mu}\sigma  \big(\dt \zeta_{\rm L}\big)^2.
$$ 

The first nonlinear term on the right-hand side mainly contributes at the wave extrema, by reducing the wave troughs and amplifying the wave crests. The second nonlinear term strengthens the wave skewness and asymmetry. In comparison with the linear reconstruction \eqref{rec_lin0}, the nonlinear one leads to more peaked wave crests and flatter troughs in agreement with wave observations (see section \ref{validation}).

\cite{oliveras2012} also derived nonlinear reconstruction formulas but under the more restrictive assumption that waves are steady and propagate at a constant celerity. However, they also obtained a heuristic formula which can be applied to a wider range of applications. The authors noted: {\it This formula is obtained somewhat heuristically, and its justification rests on the fact that it agrees extremely well with both numerical and experimental data}. This nonlinear formula writes
\begin{equation}\label{rec_heur}
\zeta_{HE}=\frac{\zeta_{\rm L}} {1-\sigma D \sinh \big(\sqrt{\mu} D \big)\zeta_{\rm H}}\qquad (\mbox{horizontal dimension }d=1),
\end{equation}
and its performance is compared, in section \ref{validation}, to those of the nonlinear reconstruction formulas derived in the present paper. A generalization of the nonlinear reconstruction \eqref{rec_quad0}, in the presence of a background current, is presented in appendix \ref{app_current1}.
%
%
\subsection{Reconstruction formulas in shallow water}

The shallow water regime ($\mu\ll 1$, $\eps=O(1)$) is a particular case of the small steepness regime ($\sigma=\eps\sqrt{\mu}\ll 1$) considered above. In this particular case, it is possible to derive simpler reconstruction formulas by making a Taylor expansion with respect to $\mu$ of the linear and nonlinear formulas \eqref{rec_lin0} and \eqref{rec_quad0}. \\
Recognizing that 
$$
\cosh(\sqrt{\mu}|\xi|)=1+\frac{\mu}{2}|\xi|^2+O(\mu^2),
$$
we obtain the following simplification of the linear reconstruction formula \eqref{rec_lin1},
\begin{equation}\label{rec_linSW0}
\zeta_{\rm SL}=\zeta_{\rm H}-\frac{\mu}{2}\Delta \zeta_{\rm H}\qquad (\mbox{horizontal dimension }d=1,2);
\end{equation}
we refer to \eqref{rec_linSW0_meas} in Appendix \ref{appnotbot} for a generalization of this formula when the pressure is measured at some point located above the bottom.
The nonlinear reconstruction formula \eqref{rec_quad0} gives similarly in shallow water
\begin{equation}\label{rec_quadSW}
\zeta_{\rm SNL}=\zeta_{\rm SL}-\eps\mu \dt \big(\zeta_{\rm SL} \dt \zeta_{\rm SL}\big)\qquad (\mbox{horizontal dimension }d=1,2)
\end{equation}
(the formula \eqref{rec_quad0} has been established only in dimension $d=1$, but \eqref{rec_quadSW} can easily be derived also in dimension $d=2$ from \eqref{rec_quad} in the shallow water regime). We refer to \eqref{rec_quadSWab} for a generalization of this formula when the pressure is measured above the bottom.

\subsection{A word on the flat bottom assumption}\label{sectnonflat}
Assume that the bottom is given in dimensionless variables by $z=-1+\beta b$ for some function $b$ that vanishes at the measurement point (i.e. the reference depth $h_0$ is the depth at rest at the measurement point), and where $\beta$ as well as the bottom steepness $\sigma_{\rm b}$ are given by
$$
\beta=\frac{a_{\rm b}}{h_0},\qquad \sigma_{\rm b}=\frac{a_{\rm b}}{L},
$$
the lenghth $a_{\rm b}$ being the scale of the amplitude of the bottom variations. With $\Phi_{\rm b}$, $P_{\rm b}$, and $\zeta_{\rm H}$ now given by
$$
\Phi_{\rm b}=\Phi_{\vert_{z=-1+\beta b}},\qquad P_{\rm b}=P_{\vert_{z=-1+\beta b}} \quad \mbox{ and }\quad
\zeta_{\rm H}=\frac{1}{\eps}(P_{\rm b}-P_{\rm atm}-1+\beta b),
$$
one readily checks that the formula \eqref{formuleexacte_ND} remains valid up to $O(\eps \sigma_{\rm b}^2)$ terms.\\
Taking into account the bottom contribution, the quadratic formula \eqref{app1} for the velocity potential becomes, denoting by $\Phi_{\rm flat}$ the formula \eqref{app1},
$$
\Phi=\Phi_{\rm flat} +\sigma_{\rm b} \frac{\sinh(\sqrt{\mu}z \vert D\vert)}{\cosh(\sqrt{\mu} \vert D\vert)}  \frac{\nabla}{\vert D\vert }\cdot   (b \frac{1}{\cosh(\sqrt{\mu}\vert D\vert)}\nabla \psi)+O(\sigma^2,\sigma_{\rm b}^2)
$$
(see \cite{lann_POF2009}). Proceeding as in \S \ref{sectquadr}, the reconstruction formula \eqref{rec_quad0} becomes, if we include the bottom contribution,
$$
\zeta_{\rm NL}=\zeta_{\rm L}-\sqrt{\mu}\sigma \dt \big(\zeta_{\rm L} \dt \zeta_{\rm L}\big)+\sqrt{\mu}\sigma_{\rm b} \dx b \big( \frac{1}{\cosh(\sqrt{\mu}\vert D\vert)}\dx \zeta\big)+O(\mu\sigma^2,\mu\sigma_{\rm b}^2)
$$
(where we used the fact that $b$ vanishes at the measurement point); note that in shallow water, the bottom contribution simplifies into $\sqrt{\mu}\sigma_{\rm b}\dx b\dx \zeta$. \\  The parameter $\sqrt{\mu}\sigma_{\rm b}$ being very small for many coastal applications we neglect throughout this paper the bottom contribution; this contribution could be taken into account for right-going waves using Proposition \ref{prop2}.

%
%
\section{Reconstruction formulas for practical applications}\label{sectpractical}
\label{practical}

We have derived in the preceding section new nonlinear formulas to reconstruct the water elevation from pressure measurements. These formulas involve a Fourier multiplier in space which requires the knowledge of $\zeta_{\rm H}$ (or equivalently $P_{\rm b}$) over the whole horizontal space $\R^2$. While, for most ocean applications, $\zeta_{\rm H}$ is only known at one measurement point $\underline{X}$. To overcome this limitation, we show in this section how to replace the Fourier multiplier in space by a Fourier multiplier in time. Two distinct approaches are considered depending on the wave type: nonlinear permanent form waves and irregular weakly nonlinear waves.

%
%
%
\subsection{Nonlinear permanent form wave}
\label{permanent}

Permanent or quasi-permanent form waves (i.e. traveling waves) are rarely observed in the field but they represent an essential toy model for understanding surface wave dynamics. The assumption that the wave field is stationary in a uniformly translating reference frame at velocity $c_p$, significantly reduces the complexity of the nonlinear water wave problem (e.g.  \cite{oliveras2012}, \cite{constantin2012} or \cite{clamond2013a}). Under this assumption it is straightforward  to replace the Fourier multiplier in space by a Fourier multiplier in time in the linear equation \eqref{rec_lin0}:
\begin{equation}\label{rec_lin0_p}
\zeta_{\rm L}(t,X)=\big[\cosh\big( \frac{\sqrt{\mu}D_t}{c_p}\big)\zeta_{\rm H}(\cdot,X)\big](t)\qquad (\mbox{horizontal dimension }d=1).
\end{equation}
The linear shallow water approximation writes
\begin{equation}\label{rec_linSW0_p}
\zeta_{\rm SL}=\zeta_{\rm H}-\frac{\mu}{2c_p^2}\dt^2\zeta_{\rm H}.
\end{equation}
Contrary to equations \eqref{rec_lin1} and \eqref{rec_linSW0}, which apply to two-dimensional wave fields, it is worth noting that these two formulas are restricted to unidirectional traveling waves.

Since the linear approximation is being estimated, we can apply the nonlinear reconstruction formulas derived in the preceding section. This reconstruction method requires knowing both the pressure time series at the measurement point and the wave celerity $c_p$.

%
%

\subsection{Irregular wave}
The permanent wave form assumption used in the preceding section only applies to a limited number of academic wave cases. In the ocean, wind-generated waves (swell and wind sea) are irregular and random and do not propagate at a constant celerity $c_p$. We show in this section how to replace, in the reconstruction formulas, the Fourier multiplier in space by a Fourier multiplier in time for linear or weakly nonlinear irregular wave fields.

\subsubsection{Linear wave}
\label{linearwave}
To replace the Fourier multiplier in space in the linear reconstruction \eqref{rec_lin0} by a Fourier multiplier in time, we use the fact that $\zeta$ is a solution of the water wave equations. Using the linear approximation \eqref{app0} in  \eqref{kinematic_ND} and \eqref{eqsurf_ND}, while dropping the $O(\sigma)$ terms, we see that $(\zeta,\psi)$ solves 
\begin{equation}\label{syslin_app}
\begin{cases}
\dt \zeta-\frac{1}{\sqrt{\mu}}\tanh(\sqrt{\mu}\vert D\vert)\vert D\vert\psi=0,\\
\dt \psi+\zeta=0;
\end{cases}
\end{equation} 
it follows that $\zeta$, and therefore $\zeta_{\rm H}$, satisfies the equation
\begin{equation}\label{eqlin}
\dt^2\zeta_{\rm H}+\frac{1}{\sqrt{\mu}}\tanh(\sqrt{\mu}\vert D\vert)\vert D\vert\zeta_{\rm H}=0.
\end{equation}
We then use the following proposition to take advantage of this equation to replace the Fourier transform in space that appears in \eqref{rec_lin0} by a Fourier transform in time.
\begin{proposition}\label{prop1}
Let $\lambda: \R \to \R$  be a $C^1$ diffeomorphism with inverse $k:=\lambda^{-1}$. If $u$ is a solution to the equation
\begin{equation}\label{lineareq}
\dt^2 u+\lambda(\abs{D})^2 u=0,
\end{equation}
and if $\lambda(\cdot)$ is odd, then, for all Fourier multiplier $f(\abs{D})$ with $f:\R^+\to \R$, one has the relation,
$$
\forall t,X,\qquad \big[f(\abs{D})u(t,\cdot)\big](X)=[f (\abs{k(D_t)})u(\cdot,X)](t).
$$
\end{proposition}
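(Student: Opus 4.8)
\noindent\emph{Proof idea.} The plan is to work on the space--time Fourier side, where equation \eqref{lineareq} confines the transform of $u$ to a characteristic variety, and to observe that on that variety the two Fourier multipliers appearing in the claimed identity carry the same symbol. Writing $\widehat{\widetilde u}(\omega,\xi)$ for the Fourier transform of $u$ in time and space together, equation \eqref{lineareq} becomes
\[
\big(\lambda(\abs{\xi})^2-\omega^2\big)\widehat{\widetilde u}(\omega,\xi)=0,
\]
so $\widehat{\widetilde u}$ is supported in $\mathcal C:=\{(\omega,\xi):\omega^2=\lambda(\abs{\xi})^2\}$. On the other hand, viewed as functions of $(t,X)$, the quantities $\big[f(\abs{D})u(t,\cdot)\big](X)$ and $\big[f(\abs{k(D_t)})u(\cdot,X)\big](t)$ have space--time Fourier transforms $f(\abs{\xi})\widehat{\widetilde u}(\omega,\xi)$ and $f(\abs{k(\omega)})\widehat{\widetilde u}(\omega,\xi)$ respectively; so the proposition follows once one checks that $f(\abs{\xi})=f(\abs{k(\omega)})$ on $\mathcal C$ and then inverts the Fourier transform.

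To get the symbol identity I would first reduce to the case $\lambda$ increasing: a $C^1$ diffeomorphism of $\R$ is strictly monotone, and replacing $\lambda$ by $-\lambda$ changes neither \eqref{lineareq} (since $\lambda(\abs{D})^2=(-\lambda(\abs{D}))^2$) nor the multiplier $f(\abs{k(D_t)})$ (since $(-\lambda)^{-1}=-k$ and $\abs{-k(\omega)}=\abs{k(\omega)}$). With $\lambda$ increasing one has $\lambda\ge 0$ and $k\ge 0$ on $\R^+$, with $k$ odd; hence on $\mathcal C$ the relation $\omega^2=\lambda(\abs{\xi})^2$ reads $\abs{\omega}=\lambda(\abs{\xi})$, equivalently $\abs{\xi}=k(\abs{\omega})=\abs{k(\omega)}$, the last equality because $k$ is odd and increasing. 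Therefore $f(\abs{\xi})=f(\abs{k(\omega)})$ on $\mathcal C$. This is exactly the point where the oddness of $\lambda$ (hence of $k$) enters: it disposes of the branch $\omega<0$ and keeps the argument of $f$ nonnegative.

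The step that deserves genuine care, and which I expect to be the real work, is concluding $\big(f(\abs{\xi})-f(\abs{k(\omega)})\big)\widehat{\widetilde u}=0$ from the two facts above, since for a rough distribution $u$ a symbol vanishing on $\mathrm{supp}\,\widehat{\widetilde u}$ need not annihilate $\widehat{\widetilde u}$. The safe route is a division argument: $g:=f(\abs{\xi})-f(\abs{k(\omega)})$ vanishes on both sheets $\omega=\pm\lambda(\abs{\xi})$ of $\mathcal C$, and $\lambda'$ never vanishes, so $g=\big(\lambda(\abs{\xi})^2-\omega^2\big)q$ with $q$ a bounded continuous (resp.\ smooth) symbol as soon as $f$ and $\lambda$ are continuous (resp.\ smooth) — which is the case for the $f=\cosh(\sqrt{\mu}\,\cdot)$ and $\lambda$ relevant here; then $g\widehat{\widetilde u}=q\cdot\big(\lambda(\abs{\xi})^2-\omega^2\big)\widehat{\widetilde u}=0$. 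Alternatively, in the regularity class actually used in this paper, where $u$ solves the linearized water-wave system and $\widehat{\widetilde u}$ is a measure carried by $\mathcal C$, equality of the symbols on $\mathcal C$ may be invoked directly. Either way, inverting the space--time Fourier transform yields $\big[f(\abs{D})u(t,\cdot)\big](X)=\big[f(\abs{k(D_t)})u(\cdot,X)\big](t)$ for all $(t,X)$, which is the claim.
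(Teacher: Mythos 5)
Your argument is correct and follows essentially the same route as the paper: both proofs pass to the space--time Fourier transform, observe that $\widehat{\widetilde u}$ lives on the characteristic set $\omega=\pm\lambda(\abs{\xi})$, and use the oddness of $\lambda$ (hence of $k$) in exactly the same place to get $\abs{\xi}=\abs{k(\omega)}$ there, so that the two multipliers $f(\abs{\xi})$ and $f(\abs{k(\omega)})$ coincide on the support. The only real difference is that the paper first writes $\widehat u(t,\xi)=A_-(\xi)e^{i\lambda^+(\xi)t}+A_+(\xi)e^{-i\lambda^+(\xi)t}$ (introducing the auxiliary symbol $\lambda^+$, which it reuses afterwards to define right-going waves), so that ${\mathcal F}_{t,x}u$ is explicitly a sum of Dirac masses carried by the two sheets and the symbol-matching step is immediate --- which sidesteps the division argument you correctly identify as necessary if one only knows that $\widehat{\widetilde u}$ is a distribution supported on the characteristic variety.
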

\begin{proof}
Let us define (see Remark \ref{remright} below for the reason of this choice) the function $\lambda^+: \R^2\to \R$ as
$$
\lambda^+(\xi)=\lambda(\abs{\xi}) \quad \mbox{ if }\quad \xi\cdot {\bf e}_1 \geq 0
\quad\mbox{ and }\quad \lambda^+(\xi)=-\lambda(\abs{\xi}) \quad \mbox{ if }\quad \xi\cdot {\bf e}_1 \leq 0,
$$
where $({\bf e}_1,{\bf e}_2)$ is a  unitary basis of the $\xi$-plane. We can rewrite \eqref{lineareq} under the form
$$
\dt^2 u+\lambda^+(D)^2 u=0.
$$
By taking the Fourier transform in space, one can therefore write $u$ under the general form
\begin{equation}\label{expsol}
\widehat{u}(t,\xi)=A_-(\xi)\exp(i\lambda^+ (\xi)t)+A_+(\xi)\exp(-i\lambda^+ (\xi)t),
\end{equation}
with $A_\pm$ determined by the initial conditions. Considering the double Fourier transform in space and time, we have therefore
$$
({\mathcal F}_{t,x}u)(\omega,\xi)=A_-(\xi)\delta_{\omega=\lambda^+(\xi)}+A_+(\xi)\delta_{\omega=-\lambda^+(\xi)}.
$$
The double Fourier transform of $f(\abs{D})u$ is thus given by
$$
({\mathcal F}_{t,x}f(\abs{D})u)(\omega,\xi)=f(\abs{\xi})A_-(\xi)\delta_{\omega=\lambda^+(\xi)}+f(\abs{\xi})A_+(\xi)\delta_{\omega=-\lambda^+(\xi)}.
$$
Now, if $\omega=\pm \lambda^+(\xi)$, one has $\abs{\xi}=\abs{k(\omega)}$, and therefore
$$
({\mathcal F}_{t,x}f(\abs{D})u)(\omega,\xi)=f(\abs{k(\omega)})A_-(\xi)\delta_{\omega=\lambda^+(\xi)}+f(\abs{k(\omega)})A_+(\xi)\delta_{\omega=-\lambda^+(\xi)}.
$$
Inverting the double Fourier transform then yields the result.
\end{proof}
\begin{remark}\label{remright}
Instead of \eqref{expsol}, a more direct representation formula for the solution $u$ of \eqref{lineareq} could have been
\begin{equation}\label{expsol2}
\widehat{u}(t,\xi)=B_-(\xi)\exp(i\lambda(\vert \xi\vert )t)+B_+(\vert \xi\vert )\exp(-i\lambda (\vert \xi\vert)t);
\end{equation}
this decomposition however does not bear any simple physical signification. Take for instance in dimension $d=1$ the case where $\lambda(\vert D\vert)=\vert D\vert$ so that \eqref{lineareq} is simply given by the wave equation $\dt^2 u -\dx^2 u=0$. The formula \eqref{expsol2} proposes a decompositon of the solution as a sum of the solutions of the scalar equations $(\dt - i\vert D\vert)u=0$ and $(\dt +i\vert D\vert)u=0$; the formula \eqref{expsol} provides a much more natural decomposition as a sum of the solutions of the scalar equations $(\dt - \dx)u=0$ and $(\dt +\dx)u=0$, i.e. as a sum of a {\it left-going wave} and of a {\it right-going wave}.\\
This leads us to the following definition:  a solution $u$ of \eqref{lineareq}  is called a \emph{right-going wave} if $A_-\equiv 0$ in the representation formula \eqref{expsol}.
\end{remark}
Since $\zeta_{\rm H}$ satisfies \eqref{eqlin}, we can use the proposition 1 with $\lambda$ given by: 
\begin{equation}\label{formulalambda}
\lambda(r)=\sgn(r)\left(\frac{r\tanh(\sqrt{\mu}r)}{\sqrt{\mu}} \right)^{1/2}.
\end{equation}
This allows us to transform  \eqref{rec_lin0} into the following linear reconstruction formula, 
\begin{equation}\label{rec_lin1}
\zeta_{\rm L}(t,X)=\big[\cosh\big(\sqrt{\mu} k( D_t)\big)\zeta_{\rm H}(\cdot,X)\big](t)\qquad (\mbox{horizontal dimension }d=1,2).
\end{equation}
In the shallow water regime the linear reconstruction writes
\begin{equation}\label{rec_linSW1}
\zeta_{\rm SL}=\zeta_{\rm H}-\frac{\mu}{2}\dt^2 \zeta_{\rm H}\qquad (\mbox{horizontal dimension }d=1,2).
\end{equation}
It is worth noting that these linear reconstructions are valid in horizontal dimension $d=1$ or $2$. The formula \eqref{rec_lin1} corresponds to the well-known {\it transfer function} method, usually written for ocean applications under the form:
$$
{\mathcal F}_t(\zeta_{\rm L})(\omega,X)=\cosh\big(\sqrt{\mu} k(\omega)\big){\mathcal F}_t(\zeta_{\rm H})(\omega,X),
$$
where $k(\omega)$ is given by the dispersion relation:
\begin{equation}\label{dispersion}
\omega^2=\frac{k(\omega)\tanh(\sqrt{\mu}k(\omega))}{\sqrt{\mu}} .
\end{equation} 

%
%
\subsubsection{Weakly nonlinear narrow-band wave }
\label{WNlinearwave}

We consider in this section weakly nonlinear narrow-band waves such as swell propagating in the coastal zone. To reconstruct the surface elevation of such waves we can apply the nonlinear reconstruction \eqref{rec_quad0} by estimating $\zeta_{\rm L}$ with equation \eqref{rec_lin1}, where the Fourier multiplier in space has been replaced  by a Fourier multiplier in time. However, due to nonlinear interactions, narrow-band spectra develop secondary harmonics of the fundamental frequencies. These secondary harmonics are phase locked, or bound, to their parent waves  and travel at a celerity which is much larger than their intrinsic (linear) phase speed. Thus, the linear dispersive relation \eqref{dispersion} strongly overestimates the wavenumber of the harmonics. Consequently, the linear reconstruction, $\widetilde{\zeta}_{\rm L}(\omega,X)=\cosh\big(\sqrt{\mu} k( \omega)\big)\widetilde{\zeta}_{\rm H}(\omega,X)$, strongly overestimates the amplitude of $\widetilde{\zeta}_{\rm L}(\omega,X)$ for the secondary harmonics. It is therefore necessary to introduce a cut-off frequency $f_c$ between the fundamental frequencies, for which we can apply the linear formula, and the high frequency tail:
\begin{eqnarray}
\widetilde{\zeta}_{\rm L}(\omega,X)&=&\cosh\big(\sqrt{\mu} k( \omega)\big)\widetilde{\zeta}_{\rm H}(\omega,X) \qquad \frac{\omega}{2\pi}\le f_c \label{rec_methodLin} \\
\widetilde{\zeta}_{\rm L}(\omega,X)&=&\widetilde{\zeta}_{\rm H}(\omega,X) \qquad \frac{\omega}{2\pi} > f_c . \nonumber
\end{eqnarray}
 Such a frequency-decomposition is relevant for nonlinear wind-generated waves, and especially for swells, but cannot be applied to nonlinear waves for which the temporal spectrum is a continuous function of the frequency (e.g. solitary waves). 
It is worth noting that contrary to what is generally accepted in the literature for swell reconstruction, the need for such a cut-off is mainly due to the wave nonlinearities rather than to pressure  measurement noise. 

The linear reconstruction, $\zeta_{\rm L}$, being estimated we can apply the  formula  \eqref{rec_quad0} to obtain a nonlinear reconstruction of wave elevation, namely,
\begin{equation}
\label{rec_nous}
\zeta_{{\rm NL}} = \zeta_{\rm L}-\sqrt{\mu}\sigma \zeta_{\rm L} \dt^2 \zeta_{\rm L}-\sqrt{\mu}\sigma (\dt \zeta_{\rm L})^2. 
\end{equation}
 The nonlinearities in this formula involve quadratic interactions among the fundamental modes which fill the elevation spectrum  beyond the cut-off frequency. 

The same frequency decomposition can also be applied to the heuristic reconstruction \eqref{rec_heur}. This equation is thus equivalent, at order $O(\sigma)$, to
\begin{equation}
\label{rec_heurc}
\zeta_{{\rm HE}}\simeq \zeta_{\rm L} + \sigma \zeta_{\rm L} D \sinh \big(\sqrt{\mu} D \big)\zeta_{\rm H} = \zeta_{\rm L}-\sqrt{\mu}\sigma \zeta_{\rm L} \dt^2 \zeta_{\rm L}, 
\end{equation}
which is similar to our equation \eqref{rec_nous} if we neglect the nonlinear term $\sqrt{\mu}\sigma\big(\dt \zeta_{\rm L}\big)^2$. As commented in \S \ref{sectbichro}, this term plays an important role to reproduce the wave skewness. A generalization of the nonlinear reconstruction \eqref{rec_nous}, in the presence of a background current, is presented in appendix \ref{app_current2}.

%
%

\subsubsection{Method implementation}
\label{method}
The nonlinear reconstruction \eqref{rec_nous} is very easy to implement. Indeed, it is a straightforward extension of the commonly used linear transfer function method.
\\

\noindent\underline{Transfer function method}
\\
\begin{enumerate}[label=(\alph*)]
\item \ One considers a measured pressure time series, $P_{\rm b}(t,X_0)$, long enough to contain several peak periods of the wave field.  
\item \ The characteristic water depth, $h_0$, corresponds to the mean water depth, which is equal to the time average of the hydrostatic water depth  $$h_{\rm H}(t,X_0)=\frac{P_{\rm b}-P_{\rm atm}}{\rho_0 g}$$  
(this is consistent because the time average of all the approximations derived in Section \ref{practical} have the same time average as $\zeta_{\rm H}$).
\item \ The dimensionless hydrostatic elevation is given by $$\zeta_{\rm H}(t,X_0)=\frac{P_{\rm b}-P_{\rm atm}-1}{\eps} .$$
\item \ The Fourier transform of the hydrostatic elevation, ${\mathcal F}_t (\zeta_{\rm H})(\omega,X_0)$, is computed.
\item \ ${\mathcal F}_t (\zeta_{\rm L})(\omega,X_0)$ is calculated from  \eqref{rec_methodLin}, where $k(\omega)$ is given by the dispersive relation \eqref{dispersion}.
\item \ Finally, the linear elevation reconstruction, $\zeta_{\rm L}$, is obtained from an inverse Fourier transform:  $\zeta_{\rm L}(t,X_0)= {\mathcal F}_t^{-1}\left( {\mathcal F}_t (\zeta_{\rm L}) \right)$.
\end{enumerate}
~\\
\noindent\underline{Nonlinear reconstruction method}

~\\
The only difference with the linear method is in the last step (f). One  computes not only $\zeta_{\rm L}= {\mathcal F}_{t}^{-1}\left( {\mathcal F}_t (\zeta_{\rm L}) \right)$, but also two other inverse Fourier transforms:
\begin{eqnarray*}
\dt \zeta_{\rm L}&=& {\mathcal F}_t^{-1}\left( i\omega {\mathcal F}_t (\zeta_{\rm L}) \right) \\
\dt^2 \zeta_{\rm L}&=& {\mathcal F}_{t}^{-1}\left( -\omega^2 {\mathcal F}_t (\zeta_{\rm L}) \right).
\end{eqnarray*}
Finally, we use these $\zeta_{\rm L}$ time derivatives to compute the nonlinear elevation reconstruction $\zeta_{\rm NL}$ following equation \ref{rec_nous}. 

The nonlinear reconstruction is essentially based on one direct and three inverse Fourier transforms, which makes the method computationally cheap and thus efficient for operational and real time coastal applications.

%
%

\section{Validations}
\label{validation}

In this section we assess the ability of the formulas derived in the preceding section to reconstruct wave elevation. We compare reconstructed surface elevation fields to numerical Euler solutions or wave-tank experiments. For the latter, the pressure measurements were located at some distance $\delta_m$ above the bed. Thus, we apply the generalized reconstructions for pressure measurements at a given $\delta_m$ (see section \ref{appnotbot}). Here we use the reconstruction formulas  in their dimensional form, as listed in appendix \ref{dimensional}.

%
%
\subsection{Solitary wave}
We compare the reconstruction formulas derived for nonlinear permanent form wave (section \ref{permanent}) to solitary wave solutions computed from the full Euler equations (\cite{dutykh2014}).  This solitary-wave test-case is academic but it provides an useful evaluation of the nonlinear performance of reconstruction formulas \eqref{rec_quadSWb} and  \eqref{rec_heurb} (see also \cite{oliveras2012}). Two solitary wave solutions are considered: $\eps_0=0.1$ (weak nonlinearity) and  $\eps_0=0.4$ (significant nonlinearity), where $\eps_0=a_0/h_0$ and $a_0$ is the amplitude of the solitary wave. Both solutions are characterized by small shallowness parameters:  $\mu = 0.068$ and $\mu = 0.25$ respectively. For such weakly dispersive waves it is natural to apply the shallow water reconstructions.

\begin{figure}
\begin{center}
\includegraphics[width=14cm]{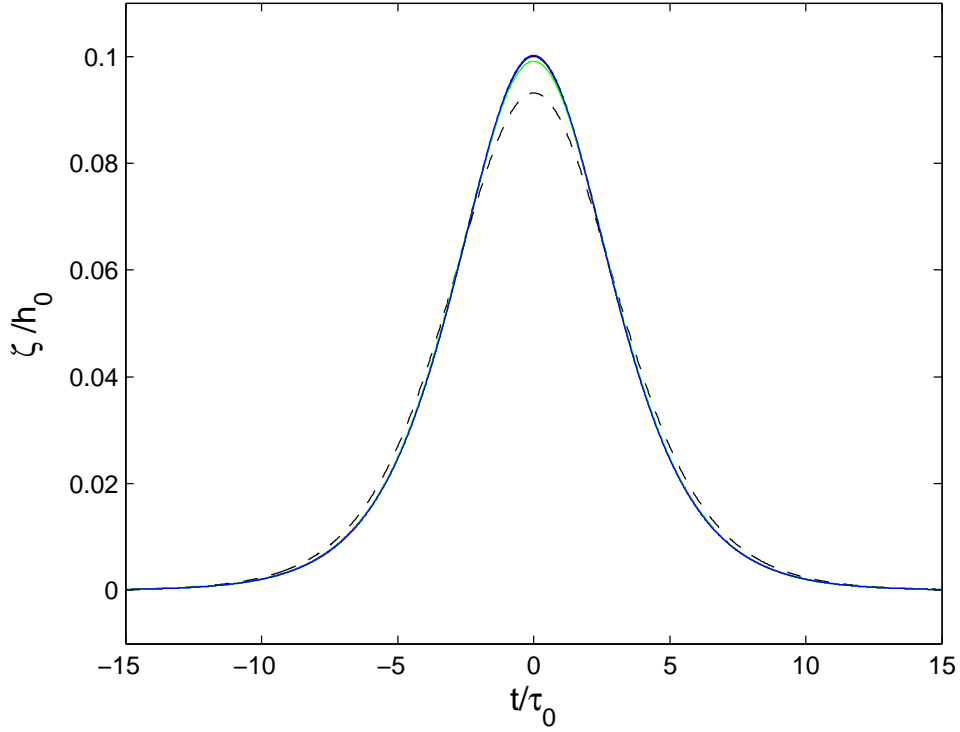}
\caption{Surface elevation reconstruction of a solitary wave, $\eps_0=a_0/h_0=0.1$, $\delta_m=0$. black line: numerical solution of the Euler equations; dashed black line: hydrostatic reconstruction $\zeta_{\rm H}$, equation \eqref{rec_Hyb}; green line: $\zeta_{\rm SL}$, equation \eqref{rec_linSWb}; blue  line: $\zeta_{\rm SNL}$, equation \eqref{rec_quadSWb}. $\tau_0=h_0/c_p$, where $c_p$ is the solitary wave celerity.} 
\label{fig_S1}
\end{center}
\end{figure}

\begin{figure}
\begin{center}
\includegraphics[width=14cm]{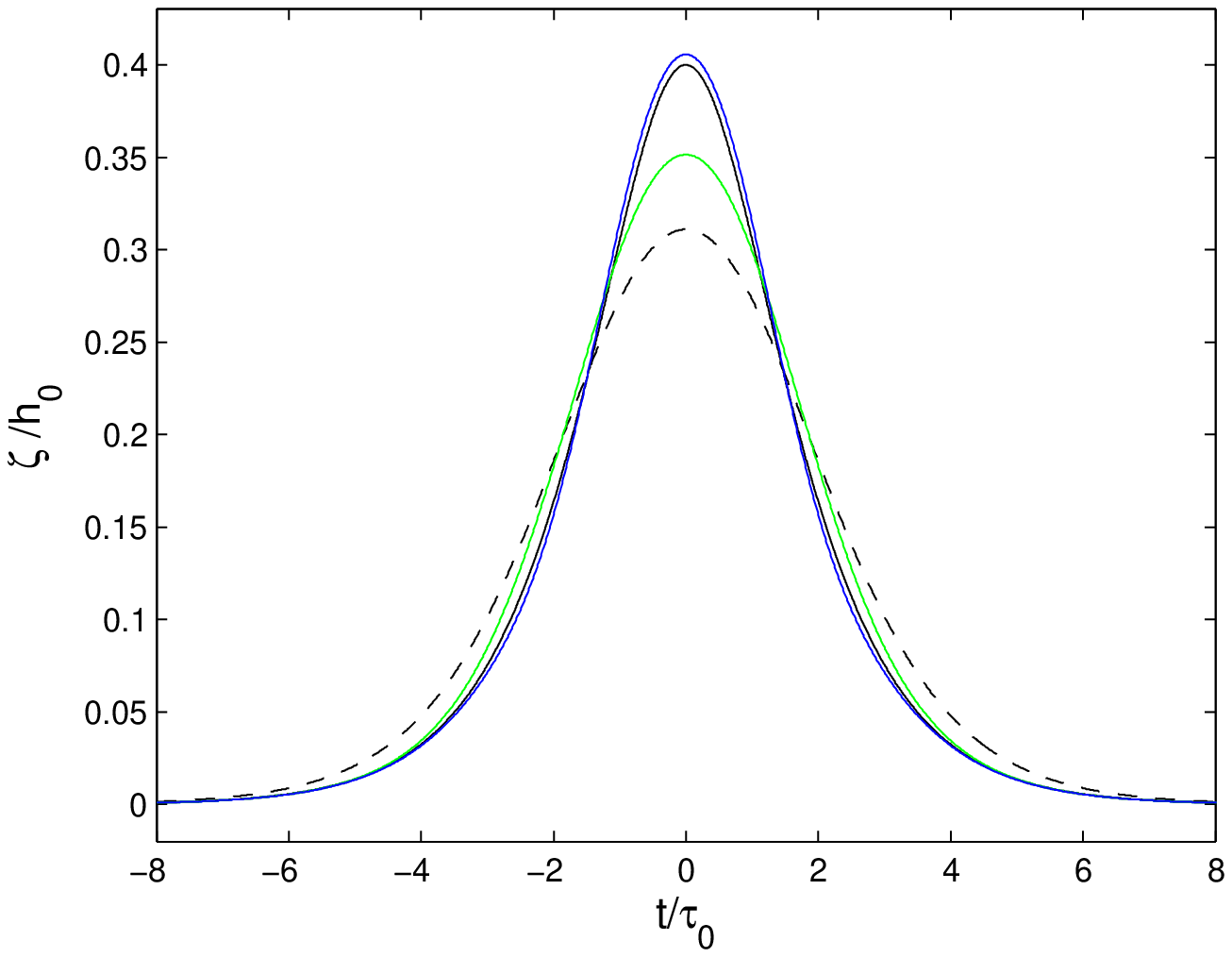}
\caption{Surface elevation reconstruction of a solitary wave, $\eps_0=a_0/h_0=0.4$, $\delta_m=0$. black line: numerical solution of the Euler equations; dashed black line: hydrostatic reconstruction $\zeta_{\rm H}$, equation \eqref{rec_Hyb}; green line: $\zeta_{\rm SL}$, equation \eqref{rec_linSWb}; blue  line: $\zeta_{\rm SNL}$, equation \eqref{rec_quadSWb}. $\tau_0=h_0/c_p$, where $c_p$ is the solitary wave celerity.} 
\label{fig_S2}
\end{center}
\end{figure}

\begin{figure}
\begin{center}
\includegraphics[width=14cm]{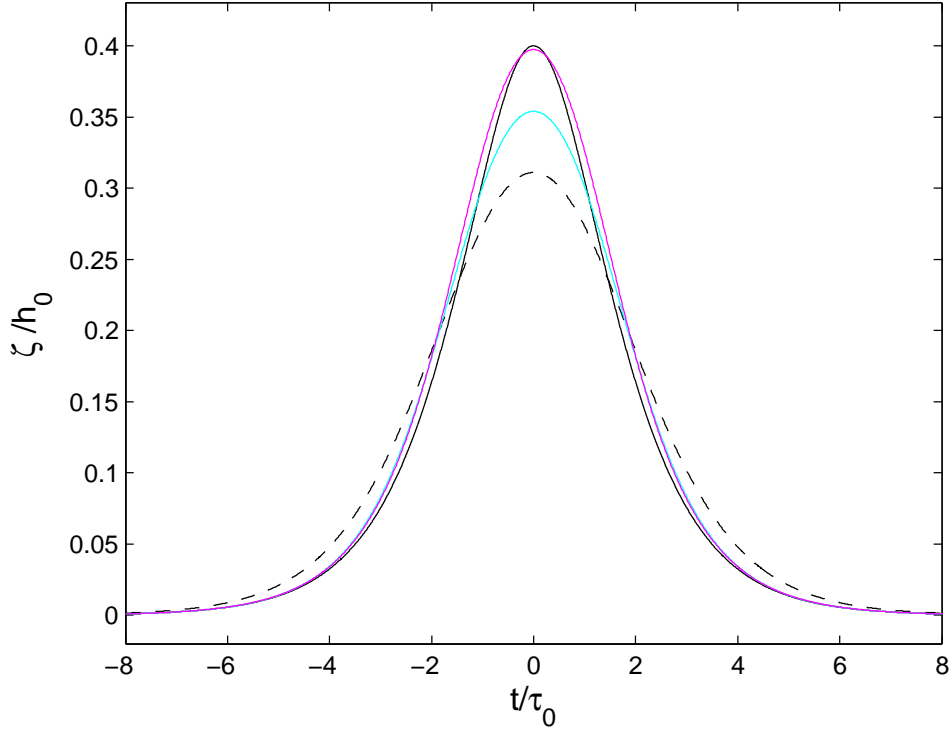}
\caption{Surface elevation reconstruction of a solitary wave, $\eps_0=a_0/h_0=0.4$, $\delta_m=0$. black line: numerical solution of the Euler equations; dashed black line: hydrostatic reconstruction $\zeta_{\rm H}$, equation \eqref{rec_Hyb}; cyan line: $\zeta_{\rm L}$, equation \eqref{rec_lin0_measb}; magenta  line:  heuristic formula, $\zeta_{HE}$, equation \eqref{rec_heurb}. $\tau_0=h_0/c_p$, where $c_p$ is the solitary wave celerity.} 
\label{fig_S3}
\end{center}
\end{figure}

\begin{figure}
\begin{center}
\includegraphics[width=14cm]{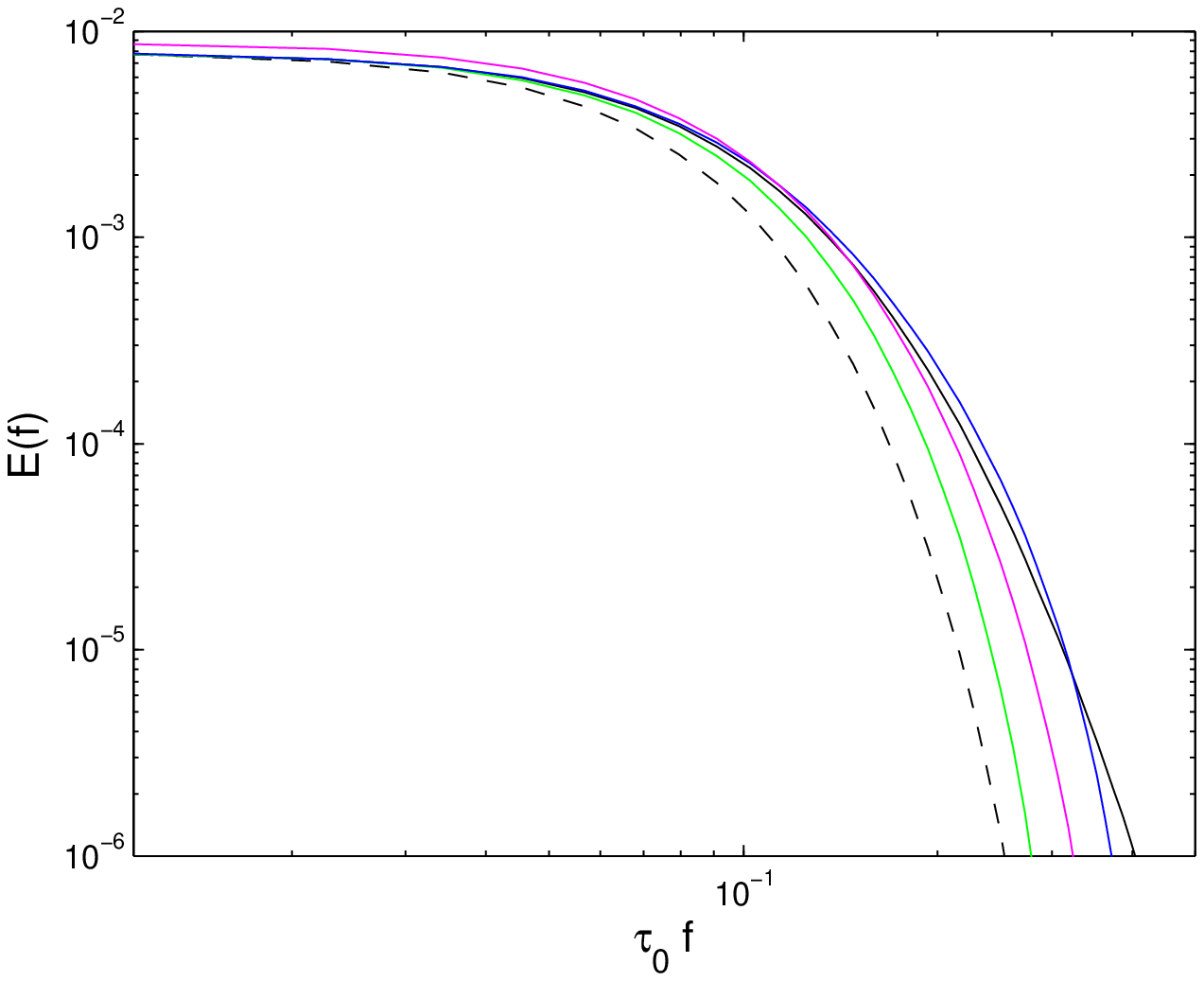}
\caption{Surface elevation energy density spectra, $E(f)$, as a function of the dimensionless frequency $\tau_0 f$, for a solitary wave, $\eps_0=a_0/h_0=0.4$, $\delta_m=0$. black line: numerical solution of the Euler equations; dashed black line: hydrostatic reconstruction $\zeta_{\rm H}$, equation \eqref{rec_Hyb}; green line: $\zeta_{\rm SL}$, equation \eqref{rec_linSWb}; blue  line: $\zeta_{\rm SNL}$, equation \eqref{rec_quadSWb}; magenta  line:  heuristic formula, $\zeta_{HE}$, equation \eqref{rec_heurb}. $\tau_0=h_0/c_p$, where $c_p$ is the solitary wave celerity.} 
\label{fig_S4}
\end{center}
\end{figure}

Figure \ref{fig_S1} presents a comparison between the Euler solitary wave elevation for $\eps_0=0.1$, and the shallow water reconstructions  \eqref{rec_Hyb}, \eqref{rec_linSWb} and \eqref{rec_quadSWb}. We can note that the hydrostatic reconstruction  \eqref{rec_Hyb} significantly underestimates the maximum wave elevation.  For this weakly-nonlinear long wave the linear shallow water reconstruction \eqref{rec_linSWb} gives good results although the maximum elevation is slightly underestimated. By taking into account the nonlinear effects  the reconstruction \eqref{rec_quadSWb} gives an excellent agreement with the Euler solution. For a high amplitude solitary wave ($\eps_0=0.4$) the linear reconstruction fails to reproduce the elevation profile (see figure \ref{fig_S2}). By contrast the nonlinear shallow water reconstruction \eqref{rec_quadSWb} agrees well with the Euler solution. The fully non-dispersive reconstruction \eqref{rec_quad0b} is not plotted because it gives similar results to the nonlinear shallow water reconstruction \eqref{rec_quadSWb}. Figure \ref{fig_S3} shows that the nonlinear heuristic  reconstruction \eqref{rec_heurb} proposed by \cite{oliveras2012} gives a good estimate of the maximum wave elevation. However, contrary to the nonlinear shallow water reconstruction \eqref{rec_quadSWb} the heuristic formula leads to wave solutions which are significantly less peaked than the Euler solutions. 

\begin{table}
\noindent\includegraphics[width=12cm]{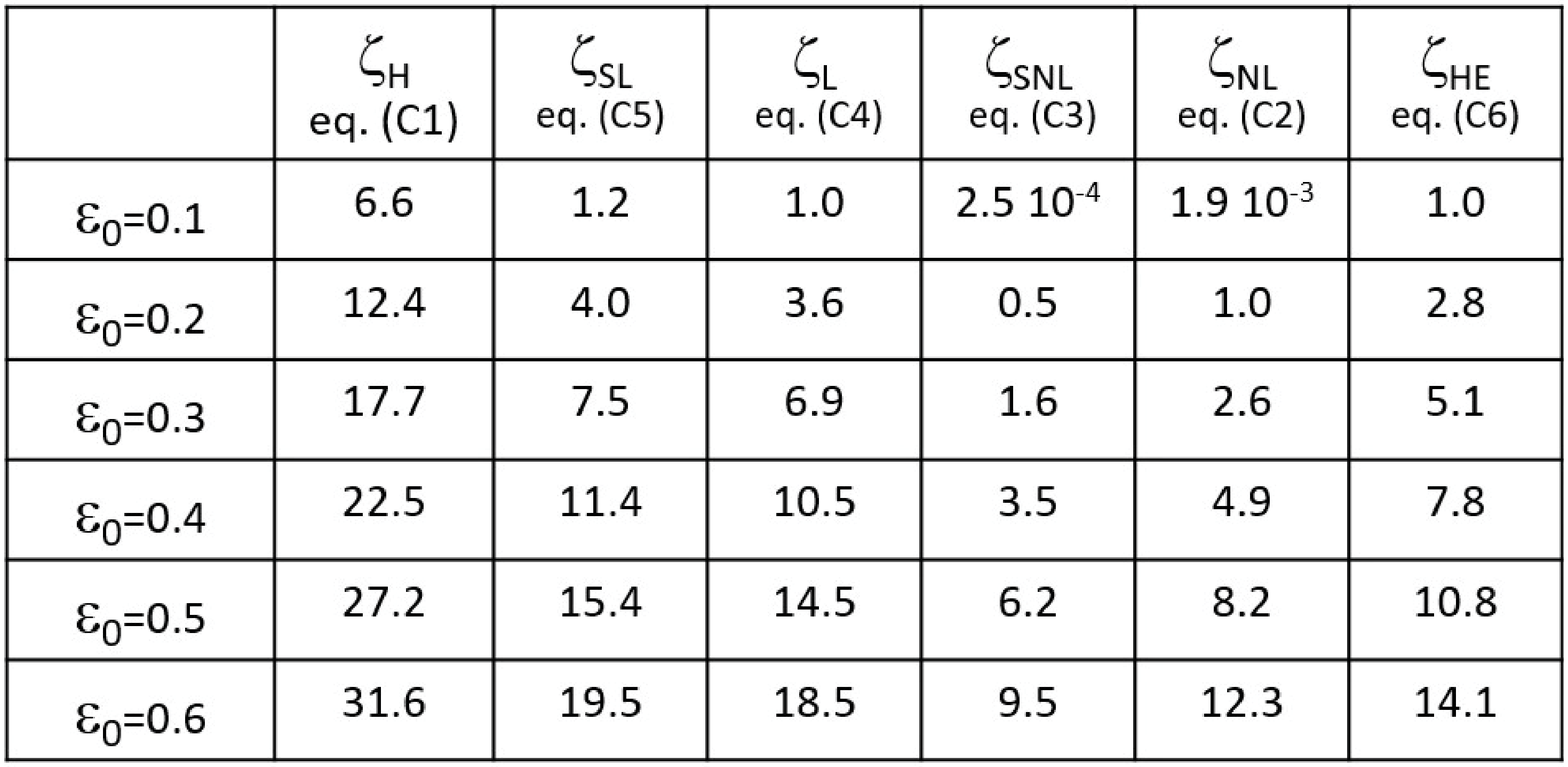}
\caption{ Normalized root mean square errors, RMSE, of the reconstruction formulas applied to solitary waves  of different intensities.}
\label{table1}
\end{table}

To quantify this observation we have computed the normalized root mean square error, $RMSE$, of the reconstruction formulas (see table \ref{table1}), applied to solitary waves of different intensities ($\eps_0$ ranging from 0.1 to 0.6). 
This error is defined by
$$
RMSE=\left(\frac{\langle (\zeta_R- \zeta )^2 \rangle}{\langle \zeta-\langle \zeta \rangle \rangle^2}\right)^{1/2} ,
$$
where $\zeta_R$ is a reconstructed wave elevation and $\langle . \rangle=\frac{1}{t_2-t_1}\int_{t_1}^{t_2} (.) dt$, with  $t_1=-5\tau_1$, $t_2=5\tau_1$, $\tau_1=\tau_0 / \sqrt{\eps_0}$ is the characteristic solitary wave duration  and $\tau_0=h_0/c_p$. 
 Table \ref{table1} shows, as a matter of course, that the $RMSE$ is an increasing function of $\eps_0$. We can see that the nonlinear formulas \eqref{rec_quadSWb}, \eqref{rec_quad0b}, \eqref{rec_heurb} significantly improve the reconstructed solution in comparison with linear reconstructions.  However, our nonlinear formulas \eqref{rec_quadSWb} and \eqref{rec_quad0b} give better results than  the heuristic formula \eqref{rec_heurb}, especially for low $\eps_0$.

A comparison of the surface elevation energy density spectra, $E(f)$, computed from the reconstruction formulas for $\eps_0=0.4$, is presented in figure \ref{fig_S4}. The shallow water linear formula \eqref{rec_linSWb}  properly reconstructs $E(f)$ at low frequencies but underestimates it at high frequencies. The heuristic reconstruction  \eqref{rec_heurb} overestimates  $E(f)$ at low frequencies (and thus also overestimates the mean elevation) and underestimates it at high frequencies. By contrast, our new nonlinear formula \eqref{rec_quadSWb} gives good results over the whole range of frequencies.

%
%
\subsection{Bichromatic wave}\label{sectbichro}

The ability of our formulas to reconstruct non-permanent form waves is assessed with respect to a bichromatic wave field propagating over a gently sloping movable bed.  This laboratory dataset is presented in \cite{michallet2017}. The two frequencies composing the wave-board motion were $f_1=0.5515$ Hz and $f_2=0.6250$ Hz, and the amplitude of the two wave components were identical with a value of 0.03 m.  The still water depth at the wave maker was $0.566$ m. Wave elevation and bottom pressure were synchronously measured in the shoaling zone at 18.5 m from the wave maker, corresponding to a still water depth of $h_0=0.326$ m ($\mu = 0.53$). The pressure sensor was located at $\delta_m=0.5$ cm above the bed.

\begin{figure}
\begin{center}
\includegraphics[width=14cm]{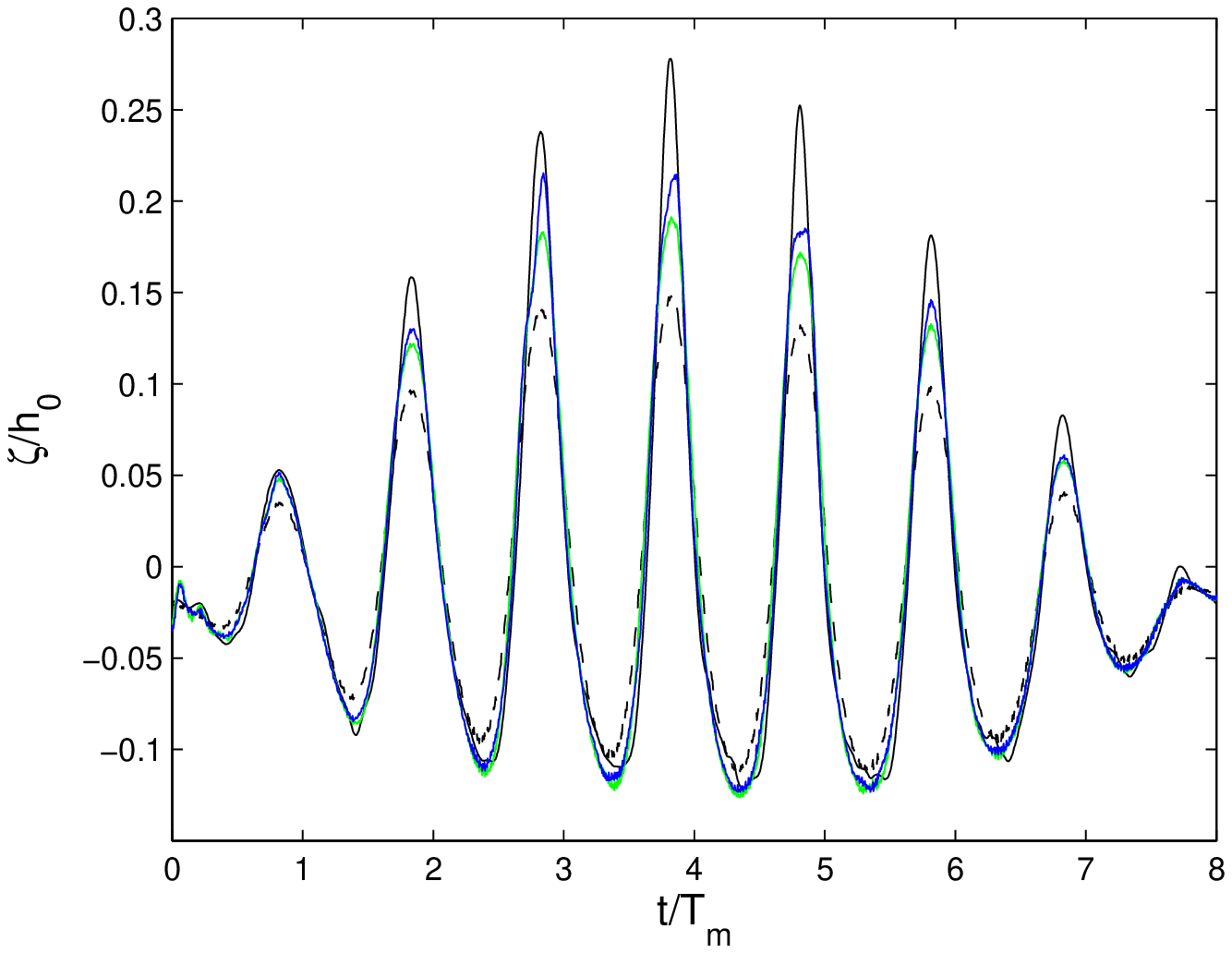}
\caption{Surface elevation reconstruction of bichromatic waves, $f_1=0.5515$ Hz, $f_2=0.6250$ Hz ($T_m=\left( \frac{f_1+f_2}{2}\right)^{-1}$), $h_0=0.326$ m and $\delta_m=0.5$ cm. black line: direct measurement of $\zeta$ ; dashed black line:  hydrostatic reconstruction $\zeta_{\rm H}$, equation \eqref{rec_Hyb}; green line: $\zeta_{\rm SL}$, equation \eqref{rec_linSWb2}; blue  line:  $\zeta_{\rm SNL}$, equation \eqref{rec_quadSWb}.} 
\label{fig_M1a}
\end{center}
\end{figure}

\begin{figure}
\begin{center}
\includegraphics[width=14cm]{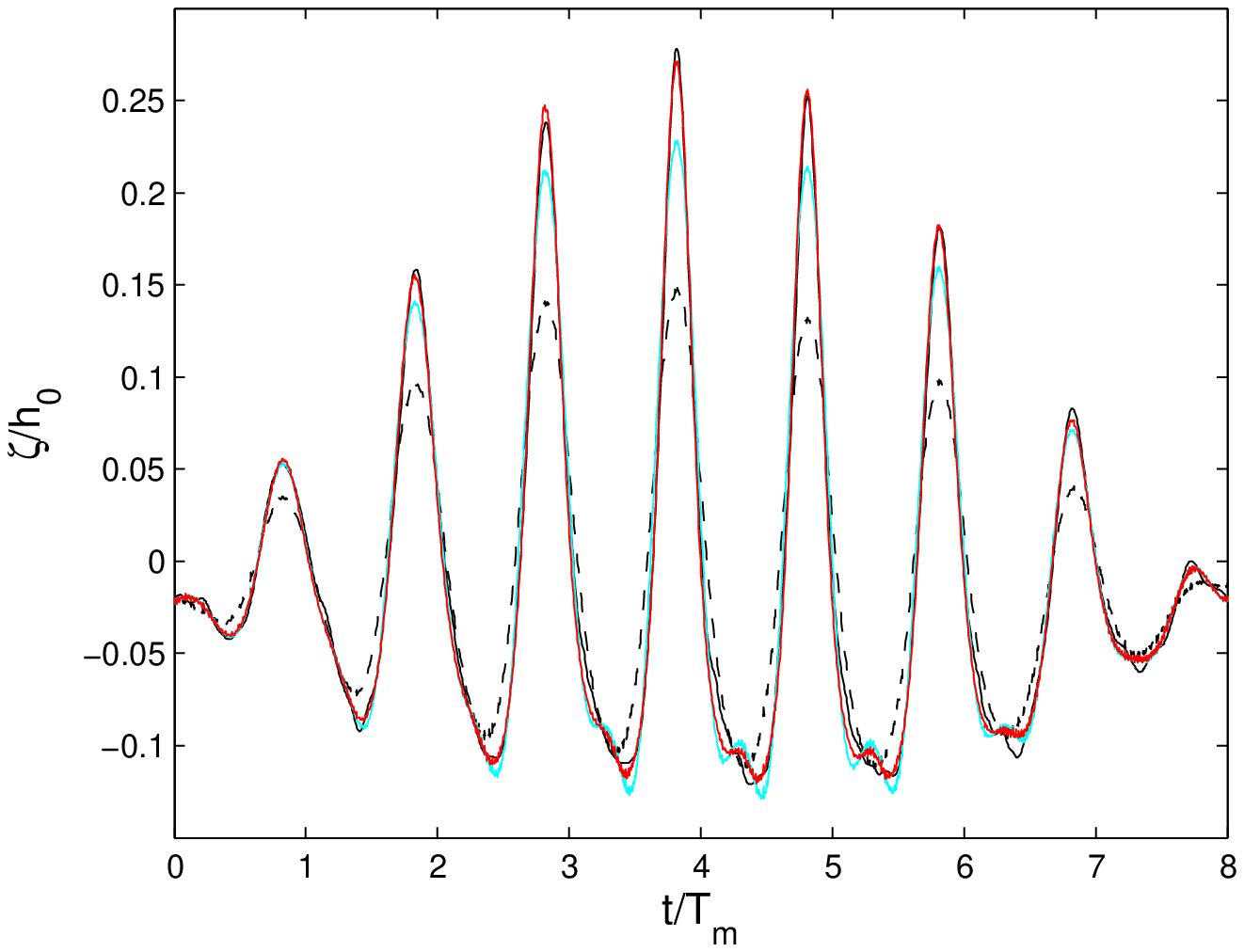}
\caption{Surface elevation reconstruction of bichromatic waves, $f_1=0.5515$ Hz, $f_2=0.6250$ Hz ($T_m=\left( \frac{f_1+f_2}{2}\right)^{-1}$), $h_0=0.326$ m and $\delta_m=0.5$ cm. cut-off frequency $f_c=1.5$ Hz. black line: direct measurement of $\zeta$ ; dashed black line:  hydrostatic reconstruction $\zeta_{\rm H}$, equation \eqref{rec_Hyb}; cyan line: $\zeta_{\rm L}$, equation \eqref{rec_lin1_meas}; red  line:  $\zeta_{\rm NL}$, equation \eqref{rec_quad0b}.} 
\label{fig_M1c}
\end{center}
\end{figure}

Figure \ref{fig_M1a} presents a comparison between shallow water reconstructions and  direct elevation measurements. The  linear shallow water reconstruction \eqref{rec_linSWb2} strongly underestimates the measured elevation especially for the highest waves of the wave group. The nonlinear reconstruction  \eqref{rec_quadSWb} improves the results but still  significantly underestimates the maximum wave elevation. These discrepancies do not question our nonlinear reconstruction approach but show the limitation of shallow water methods for describing a wave field with such a high value of the shallowness parameter ($\mu = 0.53$). To properly reconstruct the surface elevation of such dispersive waves a fully dispersive approach is required. The frequency cut-off $f_c$, introduced in section \ref{WNlinearwave}, is set to a value of 1.5 Hz. The bichromatic wave field is much better described by the fully dispersive linear reconstruction \eqref{rec_lin1_meas} (see figure \ref{fig_M1c}, cyan line) than by the shallow water linear reconstruction \eqref{rec_linSWb2} (figure \ref{fig_M1a}, green line). In figure \ref{fig_M1c} we can see that the fully dispersive linear formula  \eqref{rec_lin1_meas} gives excellent results for the lowest  waves of the wave group but significantly underestimates the elevation at the crest of the highest waves. By contrast our nonlinear formula \eqref{rec_quad0b} gives excellent results even for the highest waves. Figure \ref{fig_M1d} presents a zoom of the preceding figure on the highest wave of the wave group. In this figure we show that the linear reconstruction fails to reproduce the maximum elevation and above all the skewed shape of this nonlinear wave. By contrast the nonlinear formula \eqref{rec_quad0b} gives excellent results for both the maximum elevation and the horizontal asymmetry (or wave skewness). Figure \ref{fig_M1db} shows that the heuristic formula \eqref{rec_heurb2} gives good results for the maximum wave elevation but fails to describe the wave skewness. Indeed this wave property is largely controlled in our nonlinear reconstruction \eqref{rec_quad0b} by the term $\sqrt{\mu}\sigma\big(\dt \zeta_{\rm L}\big)^2$ which is missing in the heuristic formula (see equation\eqref{rec_heurc}). In order to quantify the horizontal asymmetry of the highest wave we have computed the skewness parameter:
$$
S_k=\frac{\langle (\zeta-\langle \zeta \rangle )^3 \rangle}{\langle (\zeta-\langle \zeta \rangle)^2  \rangle^{3/2}} ,
$$
where $\langle . \rangle=\frac{1}{t_2-t_1}\int_{t_2}^{t_1} (.) dt$, $t_1$ and $t_2$ being the times of passage of the wave troughs surrounding the highest crest. This parameter is equal to zero for a sinusoidal wave. Table \ref{table2} shows that the linear reconstruction strongly underestimates, by $25 \%  $, the wave skewness. The nonlinear reconstructions significantly improve the results with an error of $11 \%$ for the heuristic formula and only $3 \%$ for the formula \eqref{rec_quad0b}.

Figures \ref{fig_M1e} and \ref{fig_M1f} show a comparison between the measured surface elevation energy density spectrum and the spectra obtained from reconstruction formulas. We can see in figure \ref{fig_M1e} that  the linear \eqref{rec_lin1_meas} and nonlinear \eqref{rec_quad0b} reconstructions properly describe the elevation energy around the first (i.e. fundamental) and second harmonics.  For $f > f_c$, the nonlinear reconstruction \eqref{rec_quad0b} is able to accurately fill the energy around the third and fourth harmonics.  By contrast, the nonlinear heuristic formula \eqref{rec_heurb2} gives a good prediction for the first  and second harmonics but significantly underestimates the energy for the higher harmonics.

\begin{figure}
\begin{center}
\includegraphics[width=14cm]{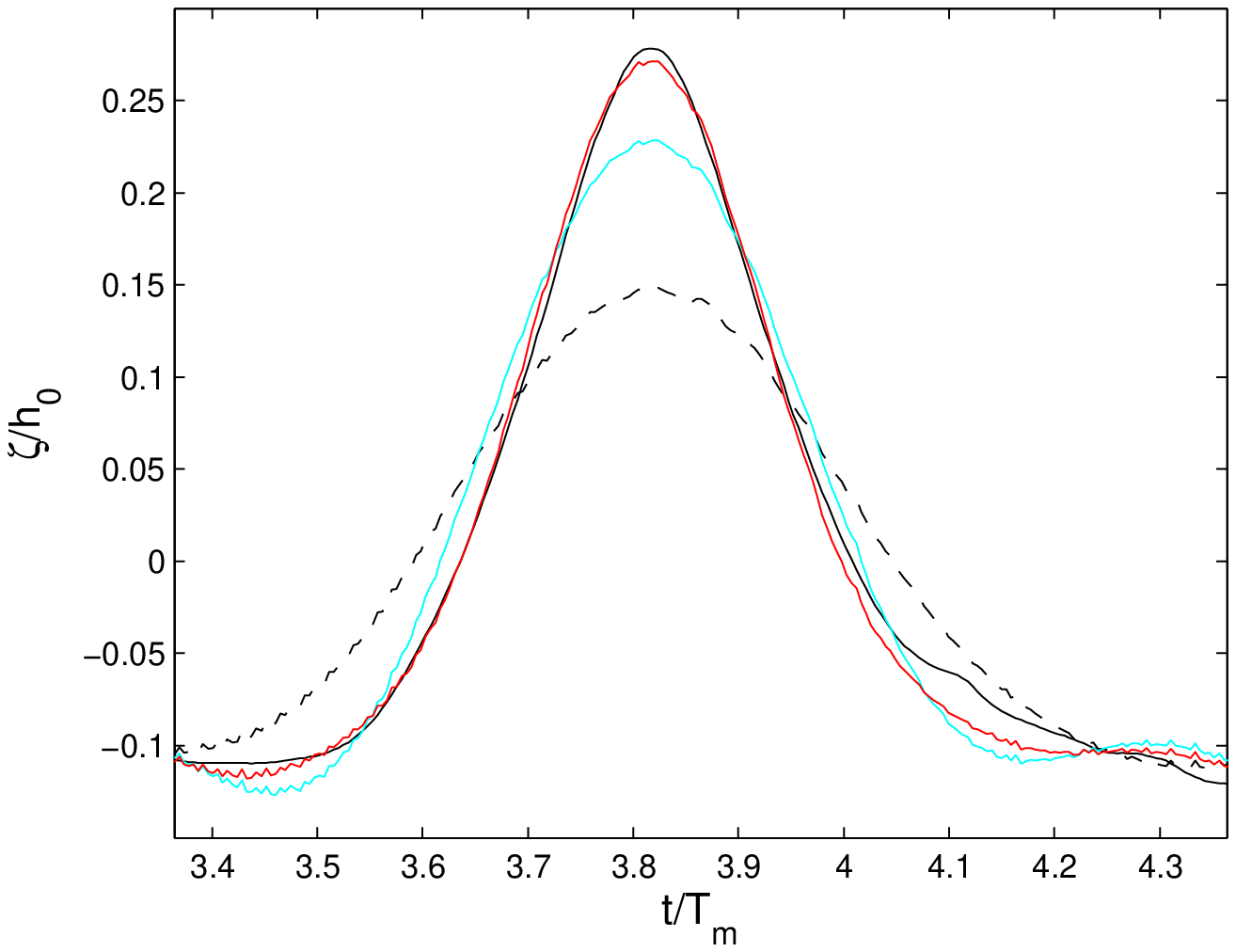}
\caption{Surface elevation reconstruction of bichromatic waves, $f_1=0.5515$ Hz, $f_2=0.6250$ Hz ($T_m=\left( \frac{f_1+f_2}{2}\right)^{-1}$), $h_0=0.326$ m and $\delta_m=0.5$ cm; zoom on the highest  wave of the wave group. cut-off frequency $f_c=1.5$ Hz. black line: direct measurement of $\zeta$ ; dashed black line:  hydrostatic reconstruction $\zeta_{\rm H}$, equation \eqref{rec_Hyb}; cyan line: $\zeta_{\rm L}$, equation \eqref{rec_lin1_meas}; red  line:  $\zeta_{\rm NL}$, equation \eqref{rec_quad0b}.} 
\label{fig_M1d}
\end{center}
\end{figure}

\begin{figure}
\begin{center}
\includegraphics[width=14cm]{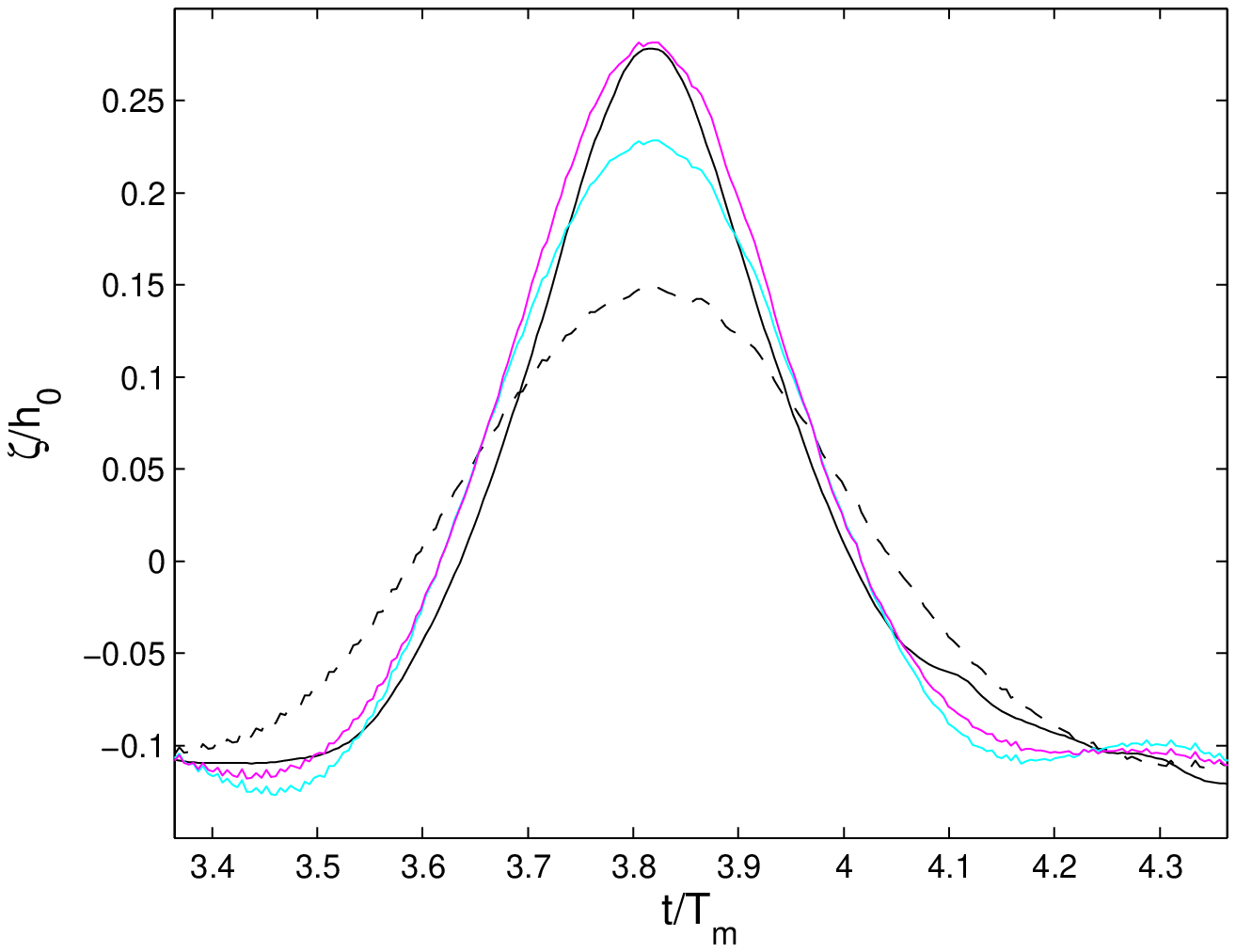}
\caption{Surface elevation reconstruction of bichromatic waves, $f_1=0.5515$ Hz, $f_2=0.6250$ Hz ($T_m=\left( \frac{f_1+f_2}{2}\right)^{-1}$), $h_0=0.326$ m and $\delta_m=0.5$ cm; zoom on the highest  wave of the wave group. cut-off frequency $f_c=1.5$ Hz. black line: direct measurement of $\zeta$ ; dashed black line:  hydrostatic reconstruction $\zeta_{\rm H}$, equation \eqref{rec_Hyb}; cyan line: $\zeta_{\rm L}$, equation \eqref{rec_lin1_meas}; magenta  line:  heuristic formula, $\zeta_{HE}$, equation \eqref{rec_heurb2}.} 
\label{fig_M1db}
\end{center}
\end{figure}

\begin{figure}
\begin{center}
\includegraphics[width=14cm]{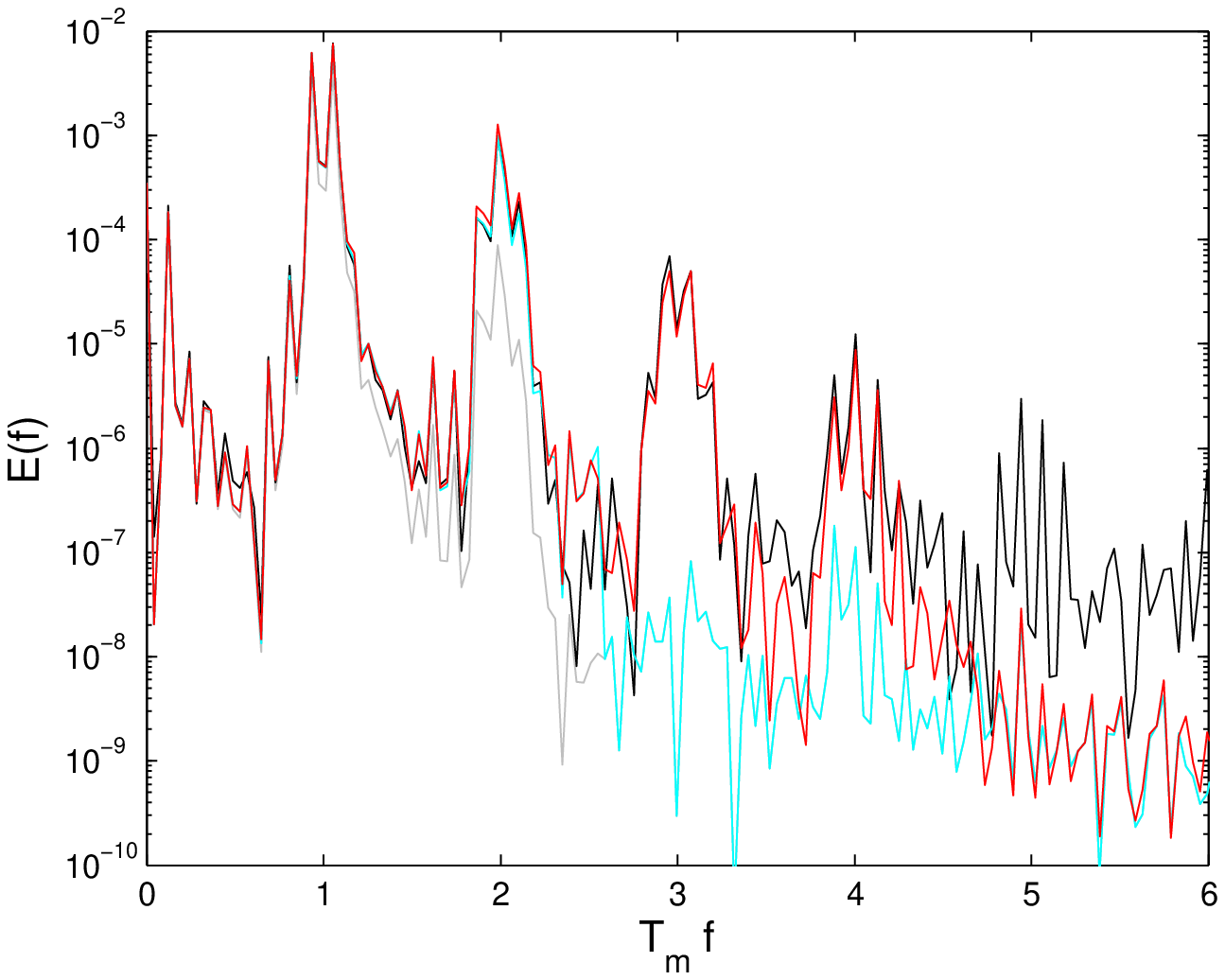}
\caption{Surface elevation energy density spectra, $E(f)$, as a function of the dimensionless frequency $T_m f$, for bichromatic waves, $f_1=0.5515$ Hz, $f_2=0.6250$ Hz ($T_m=\left( \frac{f_1+f_2}{2}\right)^{-1}$), $h_0=0.326$ m and $\delta_m=0.5$ cm. cut-off frequency $f_c=1.5$ Hz. black line: direct measurement of $\zeta$ ; grey line:  hydrostatic reconstruction $\zeta_{\rm H}$, equation \eqref{rec_Hyb}; cyan line: $\zeta_{\rm L}$, equation \eqref{rec_lin1_meas}; red  line:  $\zeta_{\rm NL}$, equation \eqref{rec_quad0b}.} 
\label{fig_M1e}
\end{center}
\end{figure}

\begin{figure}
\begin{center}
\includegraphics[width=14cm]{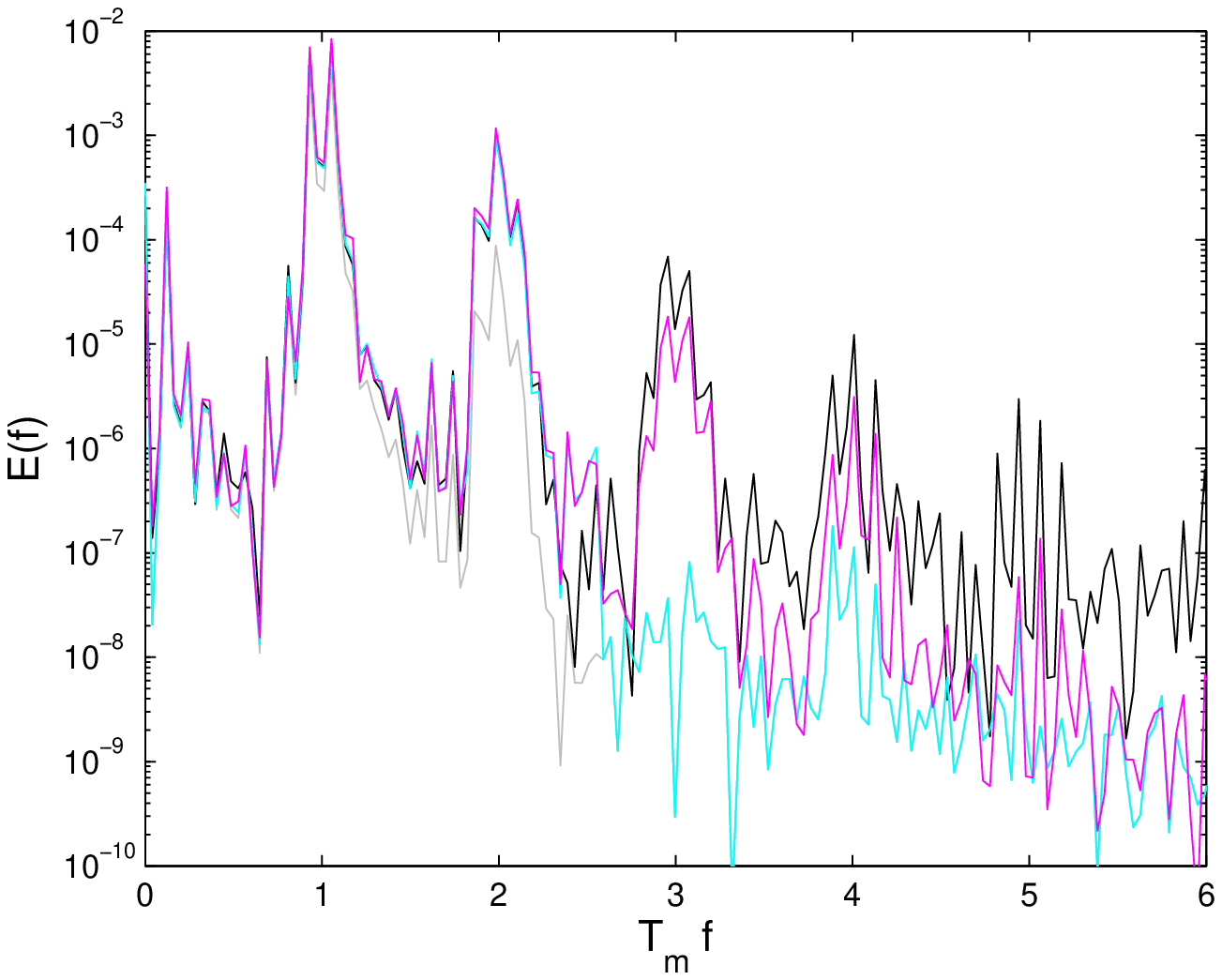}
\caption{ Surface elevation energy density spectra, $E(f)$, as a function of the dimensionless frequency $T_m f$, for bichromatic waves, $f_1=0.5515$ Hz, $f_2=0.6250$ Hz ($T_m=\left( \frac{f_1+f_2}{2}\right)^{-1}$), $h_0=0.326$ m and $\delta_m=0.5$ cm. cut-off frequency $f_c=1.5$ Hz. black line: direct measurement of $\zeta$ ; grey line:  hydrostatic reconstruction $\zeta_{\rm H}$, equation \eqref{rec_Hyb}; cyan line: $\zeta_{\rm L}$, equation \eqref{rec_lin1_meas}; magenta  line:  heuristic formula, $\zeta_{HE}$, equation \eqref{rec_heurb2}} 
\label{fig_M1f}
\end{center}
\end{figure}

\begin{table}
\small
\begin{tabular}{|l|p{2.5cm}|p{2.5cm}|p{2.5cm}|p{2.5cm}|p{2.7cm}|r|}
\hline
 &\ \  measurements  &\ \ \ $\zeta_{\rm L}$, eq. \eqref{rec_lin1_meas} &\ \ $\zeta_{HE}$, eq. \eqref{rec_heurb2}  & \ \  $\zeta_{\rm NL}$, eq.  \eqref{rec_quad0b}  \\
\hline
\ \  $S_k$  & \ \ \ \ \ \ \ 0.93  &\ \ \ \ \ \ \ 0.70 & \ \ \ \ \ \ \ 0.83 & \ \ \ \ \ \ \  0.96  \\
 \hline
   $S_k$ error& $\ $ &\ \ \ \ \ \ \ $25\%$ &\ \ \ \ \ \ \ $11\%$ &\ \ \ \ \ \ \ $3\%$  \\
\hline
\end{tabular}
\caption{ Sea surface skewness. Comparison between between reconstructed elevation and direct elevation measurements.}
\label{table2}
\end{table}

%
%
%
\section{Conclusion}
We have derived a weakly-nonlinear fully dispersive reconstruction formula, equation \eqref{rec_quad0}, which allows the elevation reconstruction of ocean waves in intermediate and shallow waters. These formulas involve a Fourier multiplier in space which requires the knowledge of the pressure field, $P(t,X)$, over the whole horizontal space. For most ocean applications, the pressure is only known at one measurement point. To overcome this limitation, we have shown in section \ref{practical} how to replace the Fourier multiplier in space by a Fourier multiplier in time. Two distinct cases have been considered: nonlinear permanent form waves (i.e. traveling waves) and irregular weakly nonlinear waves. The former corresponds to an academic case, which is useful to validate the nonlinear reconstruction formula, and the later case is essential in the context of wind-wave applications. To test the ability of our approach we have compared reconstructed surface elevation fields to numerical Euler solutions and wave-tank experiments. We have shown that our nonlinear method provides much better  results of the surface elevation reconstruction  compared to the transfer function approach commonly used in coastal applications. In particular, our method accurately reproduces the maximum elevation and the skewed shape of nonlinear wave fields. These properties are essential for applications such as those on extreme waves and wave-induced sediment transport. Moreover our reconstruction formulas are simple and easy to use for operational and real time coastal applications.

Our nonlinear method mainly applies to wind-generated waves propagating in coastal environments where the effects of background currents can be neglected. A first attempt to generalize this approach in presence of a vertically- and horizontally-uniform current is presented in appendix \ref{app_current}. However, further studies are required if we are to reconstruct nonlinear waves in presence of horizontally-variable currents (e.g. tsunami and tidal bores) or vertically-variable currents where the long-wave model with vorticity derived by \cite{castro} could be used.

%
%
%
\section*{Acknowledgments}

 Both authors have been partially funded by the ANR-
13-BS01-0009-01 BOND. David Lannes also acknowledges support from the ANR-13-BS01-0003-01 DYFICOLTI and Philippe Bonneton from ANR-14-ASTR-0019 COASTVAR.

We want to warmly thank Herv\'e Michallet (LEGI, France) for kindly providing us with his bichromatic  wave dataset.

%
%
\appendix


\section{Asymptotic expansion of the velocity potential $\Phi$}\label{appasPhi}
\label{appendix2} 
We know from \eqref{Laplace_ND} that the dimensionless velocity potential $\Phi$ solves the boundary value problem
$$
\begin{cases}
\mu \Delta\Phi+\dz^2\Phi=0 &\mbox{in }\Omega,\\
\Phi_{\vert_{z=\eps\zeta}}=\psi, &\dz \Phi_{\vert_{z=-1}}=0.
\end{cases}
$$
Our goal here is to prove the following two approximations
$$
\Phi=\Phi^0+O(\sigma),
\quad \mbox{ and }\quad \Phi=\Phi^0+\eps\Phi^1+O(\sigma^2),
$$
with
\begin{equation}\label{formulaPhi}
\Phi^0=\frac{\cosh(\sqrt{\mu}(z+1)\abs{D})}{\cosh(\sqrt{\mu}\abs{D})}\psi
\quad\mbox{ and }\quad
\Phi^1=-\frac{\cosh(\sqrt{\mu}(z+1)\abs{D})}{\cosh(\sqrt{\mu}\abs{D})} (\zeta G_0\psi).
\end{equation}
In order to compare the exact solution to these two approximations, it is convenient to work in the flat strip ${\mathcal S}=\R^d\times (-1,0)$ instead of the fluid domain $\Omega$; this is possible through the change of variable
$$
\forall (X,z)\in {\mathcal S},\qquad \phi(X,z)=\Phi(X, (z+1)\eps\zeta(X) +z).
$$
From the equation solved by $\Phi$, one deduces that $\phi$ must solve the boundary value problem
$$
\begin{cases}
\nam\cdot P(\zeta)\nam\phi=0,\quad\mbox{ in }{\mathcal S},\\
\phi_{\vert_{z=0}}=\psi,\qquad \dz \phi_{\vert_{z=-h_0}}=0,
\end{cases}
$$
with 
$$
\nam=\left(\begin{array}{c}\sqrt{\mu}\nabla \\ \dz \end{array}\right)
\quad\mbox{ and }\quad P(\zeta)=\left(
\begin{array}{cc}
(1+\eps\zeta)I_2 & -\sigma (z+1)\nabla \zeta\\
-\sigma(z+1)(\nabla\zeta)^T & \frac{1+\sigma^2(z+1)^2 \vert \nabla\zeta\vert^2}{1+\eps\zeta}
\end{array}\right).
$$
It is known that $\phi$ depends analytically on $\zeta$; therefore, one can write
$$
\phi=\phi^0+\eps\phi^1+\phi^{\geq 2} \quad\mbox{ with }\quad \phi^{\geq 2}:=\sum_{j=2}^\infty \eps^j \phi^j,
$$
and where $\phi^j$ is $j$-linear in $\zeta$ and therefore $j+1$ linear in $(\zeta,\psi)$. Expanding the matrix $P(\zeta)$ in terms of $\zeta$, one also gets
$$
P(\zeta)=P^0+\eps P^1+P^{\geq 2},
$$
with $P^0=\mbox{Id}$ and
$$
P^1=\left(
\begin{array}{cc}
\zeta & -(z+1)\sqrt{\mu}\nabla \zeta\\
-(z+1)\sqrt{\mu}(\nabla\zeta)^T & -\zeta
\end{array}\right)
\qquad
P^{\geq 2}=\left(
\begin{array}{cc}
0 & 0\\
0 & \frac{\eps^2\zeta^2+\sigma^2(z+1)^2 \vert \nabla\zeta\vert^2}{1+\eps\zeta}
\end{array}\right).
$$
\subsection{Linear approximation}
The linear (with respect to $(\zeta,\psi)$) approximation $\phi^0$ of the velocity potential is then found by solving 
$$
\begin{cases}
\nam\cdot P^0\nam\phi^0=0,\quad\mbox{ in }{\mathcal S},\\
\phi^0_{\vert_{z=0}}=\psi,\qquad \dz \phi^0_{\vert_{z=-1}}=0,
\end{cases}
$$
so that one  finds 
\begin{equation}\label{linapp}
\phi^0(X,z)=\frac{\cosh(\sqrt{\mu}(z+1)\abs{D})}{\cosh(\sqrt{\mu}\abs{D})}\psi.
\end{equation}
Moreover, since the approximation error ${\bf e}^0:=\phi-\phi^0$ solves
$$
\begin{cases}
\nam\cdot P^0\nam{\bf e}^0=- \nam \cdot (\eps P^1+P^{\geq 2})\nam \phi,\quad\mbox{ in }{\mathcal S},\\
{\bf e}^0_{\vert_{z=0}}=0,\qquad \dz {\bf e}^0_{\vert_{z=-1}}=0,
\end{cases}
$$
and since $\eps P^1+P^{\geq 2}=O(\eps)$ and $\nam \phi=O(\sqrt{\mu})$ (see for instance Proposition 2.36 in \cite{lannes2013}), one deduces easily that $\nam {\bf e}^0=O(\sigma)$ and therefore
$$
\phi-\phi^0=O(\sigma).
$$
\subsection{Quadratic approximation}
The system governing the quadratic term $\phi^1$ is then found by isolating the terms of order $1$ in $\zeta$,
$$
\begin{cases}
\nam \cdot P^0\nam\phi^1=- \nam\cdot P^1\nam\phi^0,\quad\mbox{ in }{\mathcal S},\\
\phi^1_{\vert_{z=0}}=0,\qquad \dz \phi^1_{\vert_{z=-h_0}}=0.
\end{cases}
$$
Solving directly this system leads to very complicated expressions; however, many simplifications arise from the observation that
$$
- \nam\cdot P^1\nam\phi^0=\nam\cdot P^0\nam \tilde \phi^1,\quad\mbox{ with }\quad \tilde\phi^1=(z+1)\zeta \dz \phi^0.
$$
Indeed, the difference ${\mathtt r}:=\phi^1-\tilde \phi^1$ solves
$$
\begin{cases}
\nam\cdot P^0\nam{\mathtt r}=0,\quad\mbox{ in }{\mathcal S},\\
\phi_{\vert_{z=0}}=-\eps\zeta \dz \phi^0_{\vert_{z=0}},\qquad \dz \phi_{\vert_{z=-h_0}}=0.
\end{cases}
$$
Proceeding as for the linear approximation, with $\psi$ replaced by $-\eps\zeta \dz \phi^0_{\vert_{z=0}}$, this leads to
\begin{align*}
{\mathtt r}&=-\frac{\cosh(\sqrt{\mu}(z+1)\abs{D})}{\cosh({\sqrt{\mu}}\abs{D})} \big(\zeta \dz \phi^0_{\vert_{z=0}}\big)\\
&=-\frac{\cosh(\sqrt{\mu}(z+1)\abs{D})}{\cosh(\sqrt{\mu}\abs{D})} (\zeta G_0\psi),
\end{align*}
where we used \eqref{linapp} to derive the second identity. To summarize, we have thus proved that
\begin{equation}\label{intermbil}
\phi^1=(z+1)\zeta \dz \phi^0-\frac{\cosh(\sqrt{\mu}(z+1)\abs{D})}{\cosh(\sqrt{\mu}\abs{D})} (\zeta G_0\psi).
\end{equation}
Moreover, since the approximation error ${\bf e}^1:=\phi-\phi^0-\eps \phi^1$ solves
$$
\begin{cases}
\nam\cdot P^0\nam{\bf e}^1=- \nam \cdot P^{\geq 2}\nam \phi-\eps^2 \nam \cdot P^1 \nam \phi^1 ,\quad\mbox{ in }{\mathcal S},\\
{\bf e}^1_{\vert_{z=0}}=0,\qquad \dz {\bf e}^1_{\vert_{z=-1}}=0,
\end{cases}
$$
and since $\nam \cdot P^{\geq 2}\nam \phi=O(\sigma^2)$ and $\nam \phi_1=O({\mu})$, one deduces easily that $\nam {\bf e}^1=O(\sigma^2)$ and therefore
$$
\phi-\phi^0-\eps\phi^1=O(\sigma^2).
$$
\subsection{Proof of the approximations in the physical domain}
Let us now go back to the velocity potential in the physical domain $\Omega$ by undoing our change of variables,
$$
\forall (X,z)\in \Omega,\qquad \Phi(X,z)=\phi \Big(X,\frac{z-\eps\zeta}{1+\eps\zeta}\Big).
$$
By simple Taylor expansions, we get that
$$
\Phi(X,z)=\Phi^0(X,z)+\eps \Phi^1(X,z)+O(\sigma^2),
$$
with $\Phi^0$ given by \eqref{formulaPhi} and $\Phi^1$ given by
$$
\Phi^1=-(z+1)\zeta\dz \Phi^0+\phi^1.
$$
Using  \eqref{intermbil}, we finally obtain that the bilinear term in our approximation of the velocity potential $\Phi$ is as given by \eqref{formulaPhi}.


\section{Pressure measured at an arbitrary depth }\label{appnotbot}

The hydrostatic,  linear and shallow water reconstruction formulas express the surface elevation in terms of the pressure $P_{\rm b}$ measured at the bottom $z=-1$ (in dimensionless variables). However, sensors are very often located above the bottom at some distance $\delta_m >0$.  We show here how these reconstruction formulas must be adapted in order to be expressed in terms of the pressure $P_{\rm m}$ measured at this point $z=z_{\rm m}=-1+\delta_m$.\\
Evaluating the Bernoulli equation \eqref{Bernoulli_ND} at  $z=z_{\rm m}$ rather than at the bottom, we are led to replace \eqref{eqbott_ND} by
$$
\dt \Phi_{\rm m}+\frac{1}{\eps}z_{\rm m}+\frac{\eps}{2}\vert \nabla\Phi_m\vert^2+\frac{\eps}{2\mu}\abs{\dz \Phi_{\vert_{z=z_{\rm m}}}}^2=-\frac{1}{\eps}(P_{\rm m}-P_{\rm atm}),
$$
where $\Phi_{\rm m}=\Phi{\vert_{z=z_{\rm m}}}$;
consequently, \eqref{formuleexacte_ND} is generalized into
\begin{align*}
\zeta=&\zeta_{\rm H}+\dt \Phi_{\rm m}-\dt \psi\\
&+\frac{\eps}{2}\big(\vert \nabla\Phi_{\rm m}\vert^2-\vert \nabla\psi\vert^2\big)+\frac{\eps}{2\mu}\abs{\dz \Phi_{\vert_{z=z_{\rm m}}}}^2+\frac{\eps}{2\mu} (1+ \eps^2\mu\vert \nabla\zeta\vert^2)(\dz \Phi_{\vert_{z=\eps\zeta}})^2 ,
\end{align*}
where the hydrostatic reconstruction $\zeta_{\rm H}$ is now given in terms of $P_{\rm m}$,
\begin{equation}\label{rec_hydro_ND_meas}
\zeta_{\rm H}=\frac{1}{\eps}(P_{\rm m}-P_{\rm atm}-1+\delta_m).
\end{equation}
Using \eqref{app0} and \eqref{eqsurf_ND}, this yields
\begin{align*}
\zeta&=\zeta_{\rm H}+\big(\frac{\cosh(\sqrt{\mu}(z_{\rm m}+1)\abs{D})}{\cosh(\sqrt{\mu}\abs{D})}-1\big)\dt \psi+O(\sigma)\\
&=\zeta_{\rm H}+\big(1-\frac{\cosh(\sqrt{\mu}(z_{\rm m}+1)\abs{D})}{\cosh(\sqrt{\mu}\abs{D})}\big)\zeta+O(\sigma),
\end{align*}
so that
\begin{equation}\label{rec_lin0_meas}
\zeta_{\rm L}=\frac{\cosh\big(\sqrt{\mu} \vert D\vert\big)}{\cosh(\sqrt{\mu}\delta_{\rm m}\vert D\vert)}\zeta_{\rm H}.
\end{equation}
Following the same path as in \S \ref{sectquadr}, the quadratic formula \eqref{rec_quad0} is generalized into
\begin{equation}\label{notbotNL}
\zeta_{\rm NL}=\zeta_{L}-\eps\mu \zeta_{\rm L}\dt \zeta_{\rm L}+\eps \mu \frac{\cosh(\sqrt{\mu}\vert D\vert)}{\cosh(\sqrt{\mu}\delta_m\vert D\vert)}\big( \frac{\sinh(\sqrt{\mu}\delta_m)}{\sinh(\sqrt{\mu}\vert D\vert)}\dt \zeta_{\rm L}\big)^2.
\end{equation}
In the shallow water regime ($\mu\ll 1$) these formulas can be simplified into the following generalization of \eqref{rec_linSW0},
\begin{equation}\label{rec_linSW0_meas}
\zeta_{\rm SL}=\zeta_{\rm H}-\frac{\mu}{2}(1-\delta_m^2)\Delta\zeta_{\rm H}
\end{equation}
for the linear formula, and of \eqref{rec_quadSW} for the quadratic formula,
\begin{equation}\label{rec_quadSWab}
\zeta_{\rm SNL}=\zeta_{\rm SL}-\eps\mu \dt \big(\zeta_{\rm SL} \dt \zeta_{\rm SL}\big)+\eps \mu \delta_m^2 (\dt \zeta_{\rm SL})^2.
\end{equation}


\section{Dimensional reconstruction formulas}\label{dimensional}
In variables with dimension, the hydrostatic reconstruction \eqref{rec_hydro_ND_meas} can be written 
\begin{equation}\label{rec_Hyb}
\zeta_{\rm H}=\frac{P_{\rm m}-P_{\rm atm}}{\rho g}+\delta_m-h_0,
\end{equation}


\subsection{Formulas involving Fourier multiplier in space}

In variables with dimension, the linear reconstructions \eqref{rec_lin0_meas} (fully dispersive) and \eqref{rec_linSW0_meas} (shallow water) can be written 

\begin{eqnarray*}
\zeta_{\rm L}&=&\frac{\cosh\big(h_0 \vert D\vert\big)}{\cosh\big(\delta_{\rm m}\vert D\vert\big)}\zeta_{\rm H} \\
\zeta_{\rm SL}&=&\zeta_{\rm H}-\frac{h_0^2}{2}\left(1-\left(\frac{\delta_m}{h_0}\right)^2\right)\Delta\zeta_{\rm H}.
\end{eqnarray*}

The fully dispersive \eqref{notbotNL} and shallow water \eqref{rec_quadSWab} nonlinear reconstructions write 
\begin{equation}\label{rec_quad0b}
\zeta_{\rm NL}=\zeta_{\rm L}-\frac{1}{g} \dt \big(\zeta_{\rm L} \dt \zeta_{\rm L}\big) +\frac{1}{g} \frac{\cosh(h_0\vert D\vert)}{\cosh(\delta_m\vert D\vert)}\big( \frac{\sinh(\delta_m)}{\sinh(h_0\vert D\vert)}\dt \zeta_{\rm L}\big)^2;
\end{equation}
and
\begin{equation}\label{rec_quadSWb}
\zeta_{\rm SNL}=\zeta_{\rm SL}-\frac{1}{g} \dt \big(\zeta_{\rm SL} \dt \zeta_{\rm SL}\big)+\frac{1}{g} \Big(\frac{\delta_m}{h_0}\Big)^2 (\dt \zeta_{\rm SL})^2.
\end{equation}

The nonlinear heuristic equation (see \cite{vasan2017} ) is given by

$$
\zeta_{HE}=\frac{\zeta_{\rm L}} {1- \frac{ D \sinh \big(h_0 D \big)}{\cosh\big(\delta_m D\big)}\zeta_{\rm H}}.
$$


\subsection{Formulas involving Fourier multiplier in time}

\subsubsection*{Permanent form wave}
The fully dispersive and shallow water linear reconstructions write respectively
\begin{equation}\label{rec_lin0_measb}
\zeta_{\rm L}=\frac{\cosh\big(\frac{h_0 \vert D_t\vert}{c_p}\big)}{\cosh\big(\frac{\delta_{\rm m}\vert D_t\vert}{c_p}\big)}\zeta_{\rm H},
\end{equation}
and
\begin{equation}\label{rec_linSWb}
\zeta_{\rm SL}=\zeta_{\rm H}-\frac{h_0^2}{2c_p^2}\left(1-\left(\frac{\delta_m}{h_0}\right)^2\right)\dt^2\zeta_{\rm H}.
\end{equation}

The heuristic equation writes
\begin{equation}\label{rec_heurb}
\zeta_{HE}=\frac{\zeta_{\rm L}} {1- \frac{ D_t \sinh \big(\frac{h_0 D_t}{c_p} \big)}{c_p\cosh\big(\frac{\delta_m D_t}{c_p}\big)}\zeta_{\rm H}}.
\end{equation}

\subsubsection*{Irregular waves}

Proceeding as in \S \ref{linearwave} by replacing the Fourier multiplier in space by a Fourier multiplier in time we obtain:
\begin{equation}\label{rec_lin1_meas}
\zeta_{\rm L}=\frac{\cosh\big(h_0 k(\vert D_t\vert)\big)}{\cosh\big(\delta_{\rm m}k(\vert D_t\vert))}\zeta_{\rm H},
\end{equation}
where $k(\omega)$ is given by the dispersion relation
$$
\omega^2= gk(\omega)\tanh(h_0 k(\omega)). 
$$
The shallow water linear reconstruction writes
\begin{equation}\label{rec_linSWb2}
\zeta_{\rm SL}=\zeta_{\rm H}-\frac{h_0}{2g}\left(1-\left(\frac{\delta_m}{h_0}\right)^2\right)\dt^2\zeta_{\rm H},
\end{equation}
and the heuristic formula can be expressed as
\begin{equation}
\label{rec_heurb2}
\zeta_{HE}=\frac{\zeta_{\rm L}} {1+ \frac{ 1}{g}\dt^2 \zeta_{\rm L}}.
\end{equation}


\section{Generalization in the presence of a background current}\label{app_current}

We show here how to generalize the reconstruction formulas derived above when waves propagate over a background constant horizontal current ${\bf U}_0=U_0{\bf e}_x$.  This current cannot be inferred from the pressure measurements and must be obtained from additional velocity measurements. As shown below, the influence of the current can be neglected at the precision of the model if $U_0=O(\eps)$, that is, if the current is smaller by a factor of $\eps$ than the longwave phase velocity.

\subsection{Formulas involving Fourier multiplier in space}\label{app_current1}

 The total velocity field is $\bU=\nabla_{X,z}\Phi+U_0{\bf e}_x$, with $\Phi$ satisfying \eqref{Laplace_ND} and \eqref{BC_fond_ND}, and the Bernoulli and kinematic equations become
$$
\dt \Phi + U_0\dx \Phi+\frac{1}{\eps}z+\frac{\eps}{2}\abs{\nabla\Phi}^2+\frac{\eps}{2\mu}\abs{\dz\Phi}^2 +\frac{1}{2\eps}U_0^2=-\frac{1}{\eps}(P-P_{\rm atm}),
$$
and
$$
\mu \dt \zeta +\mu U_0\dx\zeta =\dz \Phi -\eps\mu \nabla\zeta \cdot \nabla \Phi \quad \mbox{ on }\quad z=\eps\zeta;
$$
where $U_0$ is dimensionalized by $\sqrt{gh_0}$.
Consequently, formula \eqref{formuleexacte_ND} becomes
$$\zeta=\zeta_{\rm H}+(\dt+ U_0\dx) (\Phi_{\rm b}- \psi)+\frac{\eps}{2}\big(\vert \nabla\Phi_{\rm b}\vert^2-\vert \nabla\psi\vert^2\big)+\frac{\eps}{2\mu} (1+ \eps^2\mu\vert \nabla\zeta\vert^2)(\dz \Phi_{\vert_{z=\eps\zeta}})^2 ,
$$
so that the linear reconstruction formula \eqref{rec_lin0} remains valid in the presence of a background current $U_0$,
\begin{align}
\nonumber 
\zeta_{{\rm L},U_0}(t,X)&=\zeta_{{\rm L}}(t,X)\\
\label{rec_lin0_U0}
&=\big[\cosh(\sqrt{\mu}\vert D\vert)\zeta_{\rm H}(t,\cdot)\big](X).
\end{align}
For the nonlinear reconstruction formula, one readily gets that in the presence of a background current, the formula \eqref{rec_quad0} becomes
\begin{align}
\nonumber
\zeta_{{\rm NL},U_0}&=\zeta_{\rm L}-\sqrt{\mu}\sigma (\dt+ U_0\dx) \big(\zeta_{\rm L} (\dt+ U_0\dx) \zeta_{\rm L}\big)\\
\label{rec_quad0_c}
&=\zeta_{\rm L}-\sqrt{\mu}\sigma\big(  (\dt+ U_0\dx) \zeta_{\rm L}\big)^2-\sqrt{\mu}\sigma \zeta_{\rm L} (\dt+  U_0\dx)^2 \zeta_{\rm L};
\end{align}
this formula has the advantage of being Galilean invariant. It can also be seen that if $U_0$ is of order $O(\eps)$ in the frame of reference of the bottom, than the corrections due to the background current are of size $O(\sigma^2)$ and therefore negligible at the precision of the approximation.

\subsection{Formulas involving Fourier multiplier in time}\label{app_current2}

There is  a difference in the way one can pass from a Fourier transform in space to a Fourier transform in time, as in \eqref{rec_lin1}. Indeed, the equation solved at first (linear) order by $\zeta$ is no longer \eqref{eqlin} but
\begin{equation}\label{eqcurrent}
(\dt +U_0\dx)^2u+\lambda(D)^2 u=0,
\end{equation}
with $\lambda(\cdot)$ still given by \eqref{formulalambda}.
In the one-dimensional case ($d=1$), the notion of right-going wave, introduced in Remark \ref{remright}, can be easily extended in the presence of a background current. A solution $u$ to \eqref{eqcurrent} is called {\it right-going} if its Fourier transform is of the form
\begin{equation}\label{RGc}
\widehat{u}(t,\xi)=\widehat{u}(t=0,\xi)\exp\big(-it(U_0 \xi +\lambda(\xi))\big).
\end{equation}
When $U_0\geq  0$ (following current), the function $\xi\in \R\mapsto \lambda(\xi)+U_0\xi$ is one-to-one, and we can define the function $k_{U_0}$ by the dispersion relation
\begin{equation}\label{defk0}
\omega=\lambda(k_{U_0}(\omega))+U_0 k_{U_0}(\omega);
\end{equation}
if $-1 < U_0 < 0$ (opposing current, we only consider here the subcritical case $\vert U_0\vert < 1$), the function $\xi\in \R\mapsto \lambda(\xi)+U_0\xi\in \R$ is no longer one-to-one and we need to introduce the critical wave-number and frequency $k_{\rm crit}\geq 0$ and $\omega_{\rm crit}$ defined as
$$
\lambda'(k_{\rm crit})=-U_0, \qquad \omega_{\rm crit}=\lambda(k_{\rm crit})+U_0 k_{\rm crit},
$$
so that the function $\xi\in\R \mapsto \lambda(\xi)+U_0\xi$ is one-to-one as a mapping $[-k_{\rm crit}, k_{\rm crit}]\to [-\omega_{\rm crit},\omega_{\rm crit}]$ (see Figure \ref{fig_disp}).
\begin{figure}
 \begin{center}
\begin{overpic}[width=0.45\textwidth]{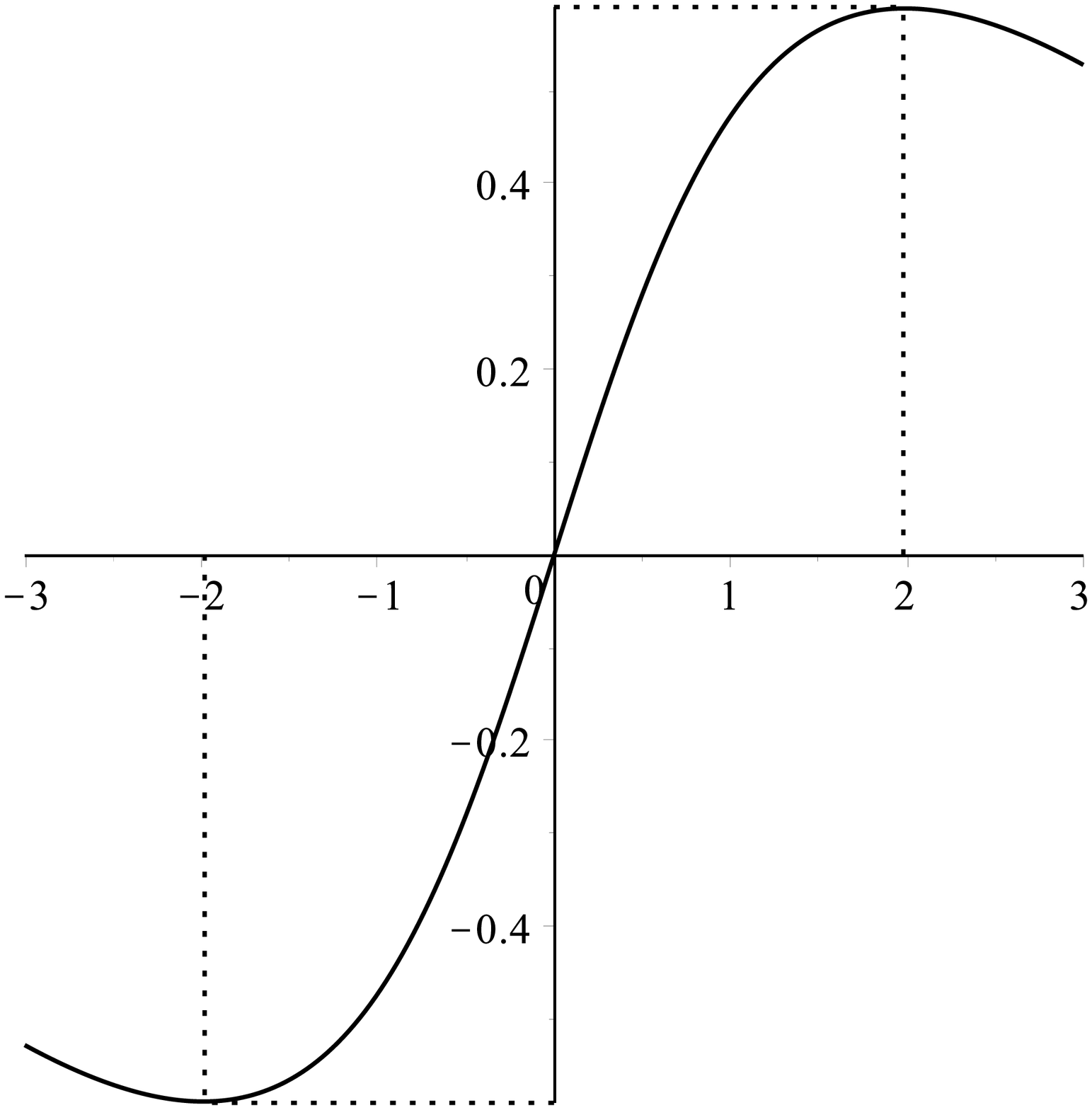}
 \put (20,52) {$\displaystyle -k_{\rm crit}$}
  \put (83,52) {$\displaystyle k_{\rm crit}$}
   \put (52,2) {$\displaystyle -\omega_{\rm crit}$}
      \put (52,95) {$\displaystyle \omega_{\rm crit}$}
\end{overpic}
 \includegraphics[width=0.45\textwidth]{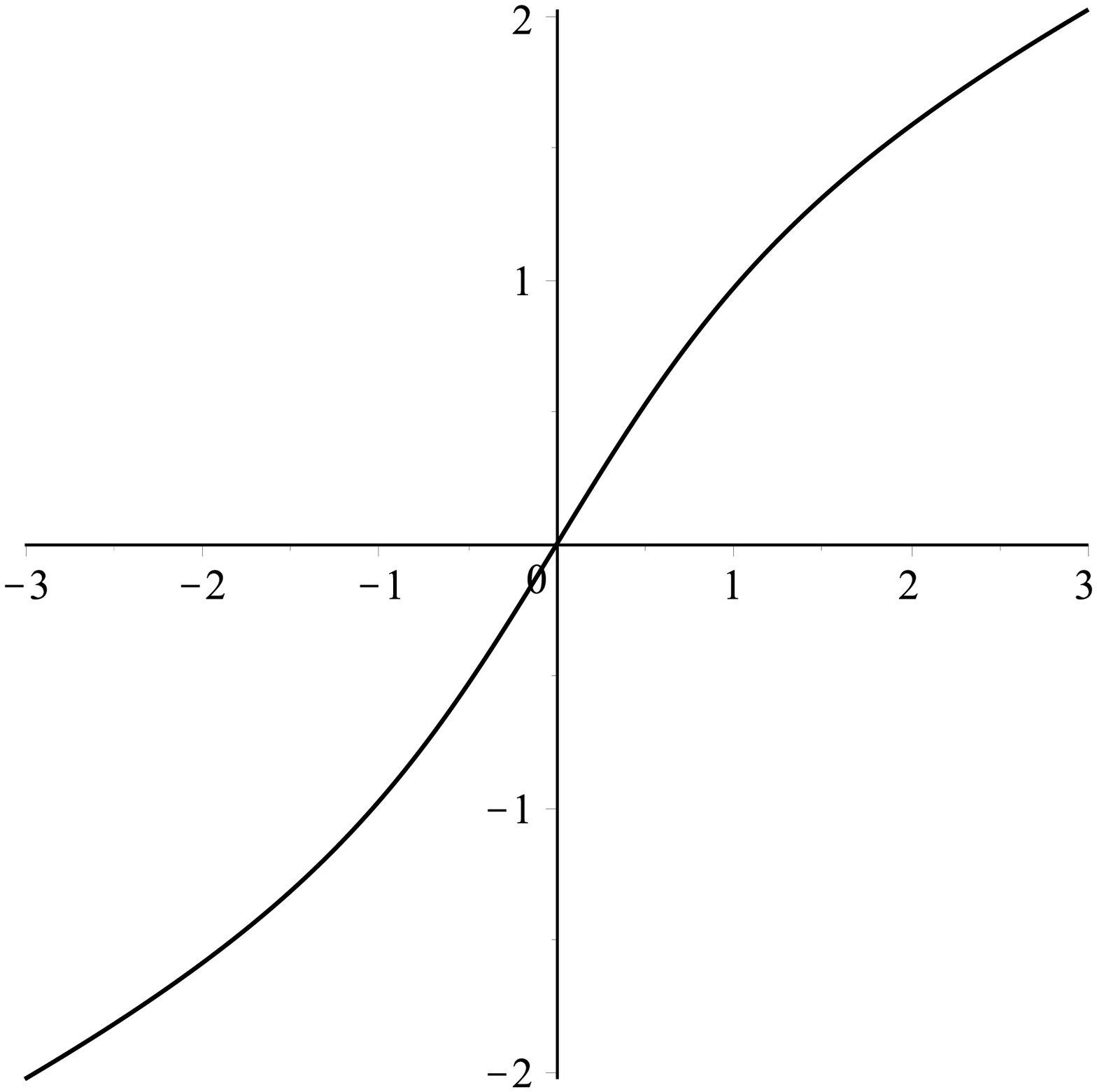}
 \caption{The mapping $\xi\mapsto \lambda(\xi)+U_0 \xi$ for $-1<U_0 < 0$ (left) and $U_0\geq 0$ (right).}
 \label{fig_disp}
 \end{center}
 \end{figure}
It is therefore possible to define $k_{\rm U_0}(\cdot)$ through \eqref{defk0} when $0< U_0\leq 1$, provided that the spatial Fourier transform $\widehat{\zeta}(t,\cdot)$ is supported in $[-k_{\rm crit}, k_{\rm crit}]$, or equivalently if the time Fourier transform $\widetilde{\zeta}(\cdot,x)$ is supported in $[-\omega_{\rm crit},\omega_{\rm crit}]$. We shall therefore make the following assumption
$$
{\bf (H)}\qquad \begin{cases}
U_0\geq 0\\
\mbox{or } -1<U_0 < 0\mbox{ and } \widetilde{\zeta}(\cdot,x)\mbox{ is supported in }[-\omega_{\rm crit},\omega_{\rm crit}]
\end{cases}
$$

\begin{remark}
The linear dispersion for \eqref{eqcurrent} has four solutions for $k$ in general (only two if $U_0 = 0$), when $\omega$ and $U_0$  are given. Among them, two describe counter-propagating long-wavelength waves  while the other two correspond to short-wavelength waves. The above assumption means that we assume that the wave profile we want to reconstruct corresponds to the right-going long-wavelength component (see for instance \cite{EMPPR} for a situation where the counter-propagating waves are relevant).
\end{remark}
We have seen that the linear reconstruction formula \eqref{rec_lin0_U0} is the same as in the absence of any background current. However, when replacing the Fourier multiplier in space by a Fourier multiplier in time, the influence of $U_0$ can be seen. In order to perform such a substitution, we need the following proposition. 
\begin{proposition}\label{prop2}
Let $d=1$, $\lambda$ be given by \eqref{formulalambda}, and let $u$ be a \emph{right-going} solution to \eqref{eqcurrent} that satisfies ${\bf (H)}$. Then, for all Fourier multiplier $f:\R\to \R$, one has
$$
\forall t,x,\qquad [f(D)u(t,\cdot)](x)=[f(-k_{U_0}(D_t))u(\cdot,x)](t),
$$
where $k_{U_0}(\cdot)$ is defined through \eqref{defk0}.
\end{proposition}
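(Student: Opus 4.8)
The plan is to reproduce, almost verbatim, the argument used in the proof of Proposition \ref{prop1}, with one structural change: the general representation formula \eqref{expsol} is replaced by the right-going one \eqref{RGc}, so that a single dispersion branch occurs instead of two. In particular no absolute values are needed, and the one-to-one character of the symbol — which in Proposition \ref{prop1} came from the assumed oddness and bijectivity of $\lambda$ — is now provided by hypothesis ${\bf (H)}$.

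Concretely, I would start from \eqref{RGc} and take the Fourier transform in time. Since $t\mapsto \exp\big(-it(U_0\xi+\lambda(\xi))\big)$ has time Fourier transform a Dirac mass located at $\omega=-(U_0\xi+\lambda(\xi))$, the double Fourier transform in space and time is
$$
({\mathcal F}_{t,x}u)(\omega,\xi)=\widehat u(t=0,\xi)\,\delta_{\omega=-(U_0\xi+\lambda(\xi))},
$$
and applying $f(D)$, which multiplies the spatial frequency by $f(\xi)$, gives
$$
({\mathcal F}_{t,x}f(D)u)(\omega,\xi)=f(\xi)\,\widehat u(t=0,\xi)\,\delta_{\omega=-(U_0\xi+\lambda(\xi))}.
$$
On the support of this distribution one has $\omega=-\big(\lambda(\xi)+U_0\xi\big)$. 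Setting $\Omega_+(\xi):=\lambda(\xi)+U_0\xi$, hypothesis ${\bf (H)}$ guarantees that $\Omega_+$ is a bijection between the relevant wavenumbers and frequencies: this is immediate when $U_0\ge 0$, and when $-1<U_0<0$ it holds because the support condition on $\widetilde\zeta(\cdot,x)$ confines $\xi$ to $[-k_{\rm crit},k_{\rm crit}]$, on which $\Omega_+$ is one-to-one (see Figure \ref{fig_disp}). Hence on the support one may solve $\xi=\Omega_+^{-1}(-\omega)=k_{U_0}(-\omega)$; and since $\lambda$ is odd by \eqref{formulalambda} and $\xi\mapsto U_0\xi$ is odd, $\Omega_+$ is odd, so $k_{U_0}=\Omega_+^{-1}$ is odd and $k_{U_0}(-\omega)=-k_{U_0}(\omega)$. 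Therefore $f(\xi)=f(-k_{U_0}(\omega))$ on the support, i.e.
$$
({\mathcal F}_{t,x}f(D)u)(\omega,\xi)=f(-k_{U_0}(\omega))\,\widehat u(t=0,\xi)\,\delta_{\omega=-(U_0\xi+\lambda(\xi))}=\big({\mathcal F}_{t,x}f(-k_{U_0}(D_t))u\big)(\omega,\xi),
$$
and inverting the double Fourier transform yields the stated identity.

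The step I expect to be the crux is the one combining the sign bookkeeping with ${\bf (H)}$: one must check that the oddness of $\lambda$ really propagates to $\Omega_+$ and hence to $k_{U_0}$, so that the minus sign in $f(-k_{U_0}(D_t))$ is the correct one, and — in the subcritical opposing-current case $-1<U_0<0$ — that the Dirac support genuinely lies inside $[-k_{\rm crit},k_{\rm crit}]$, since otherwise the relation $\omega=-\Omega_+(\xi)$ does not determine $\xi$ uniquely (the dispersion relation for \eqref{eqcurrent} has up to four roots in general). Everything else is a routine transcription of the proof of Proposition \ref{prop1}.
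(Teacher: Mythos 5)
Your proof is correct and takes essentially the same route as the paper's: take the double Fourier transform of the right-going representation \eqref{RGc}, note that on the support of the resulting Dirac mass one has $\xi=-k_{U_0}(\omega)$, substitute $f(\xi)=f(-k_{U_0}(\omega))$, and invert. The only difference is that you spell out what the paper leaves implicit, namely that ${\bf (H)}$ makes $\xi\mapsto\lambda(\xi)+U_0\xi$ invertible on the relevant support and that the oddness of $\lambda$ propagates to $k_{U_0}$, giving $k_{U_0}(-\omega)=-k_{U_0}(\omega)$.
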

Note that contrary to Proposition \ref{prop1}, we assume here that $d=1$ and that the solution is right-going; this allows us to handle Fourier multipliers that are not even functions. 
\begin{proof}
Using the definition \eqref{RGc} of a right-going wave, one easily obtains for the double Fourier transform in space and time that
\begin{align*}
\big({\mathcal F}_{t,x}f(D)u\big)(\omega,\xi)&=f(\xi)\widehat{u}(t=0,\xi)\delta_{\omega=-(U_0 \xi+\lambda(\xi))}\\
&=f(-k_{U_0}(\omega))\widehat{u}(t=0,\xi)\delta_{\omega=-(U_0 \xi+\lambda(\xi))},
\end{align*}
where we used the fact that $k_{U_{0}}(-\omega)=-k_{U_{0}}(\omega)$. Inverting the double Fourier transform then yields the result.
\end{proof}
Using Proposition \ref{prop2}, the linear reconstruction formula \eqref{rec_lin0_U0} implies the following generalization of \eqref{rec_lin1} in the presence of a background current,
\begin{equation}\label{rec_lin1_u0}
\zeta_{{\rm L},U_0}(t,X)=\big[\cosh\big(\sqrt{\mu} k_{U_0}(D_t )\big)\zeta_{\rm H}(\cdot,X)\big](t)\qquad (d=1, \mbox{ right-going}, {\bf (H)}\mbox{ holds}),
\end{equation}
with $k_{U_0}(\cdot)$ given by \eqref{defk0}.

\medbreak
After the estimation of the linear reconstruction $\zeta_{{\rm L},U_0}$, we can use it in the nonlinear reconstruction formula \eqref{rec_quad0_c}. In the presence of a  background opposing current (i.e. $U_0\leq 0$), this formula involves space derivatives of  $\zeta_{{\rm L},U_0}$.
For right-going waves, 
and under the assumption ${\bf (H)}$, we can replace the differential operator $\dx$ by $-ik_{U_0}(D_t)$ using Proposition \ref{prop2} and obtain
\begin{align}
\nonumber  
\zeta_{{\rm NL},U_0}=&\zeta_{{\rm L},U_0}-\sqrt{\mu}\sigma\big(  (\dt-iU_0k_{U_0}(D_t)) \zeta_{{\rm L},U_0}\big)^2\\
&-\sqrt{\mu}\sigma \zeta_{{\rm L},U_0} (\dt-iU_0 k_{U_0}(D_t))^2 \zeta_{{\rm L},U_0}.
\end{align}

%
%

\end{document}